\documentclass[pdflatex,sn-apa]{sn-jnl}

\usepackage{graphicx}%
\usepackage{multirow}%
\usepackage{amsmath,amssymb,amsfonts}%
\usepackage{amsthm}%
\usepackage{mathrsfs}%
\usepackage[title]{appendix}%
\usepackage{xcolor}%
\usepackage{textcomp}%
\usepackage{manyfoot}%
\usepackage{booktabs}%
\usepackage{algorithm}%
\usepackage{algorithmicx}%
\usepackage[noend]{algpseudocode}%
\usepackage{listings}%

\usepackage{mathtools}%
\usepackage{bm}%
\usepackage{bbm}%
\usepackage{tabularx}%
\usepackage{makecell}%
\usepackage{subcaption}%
\usepackage{array}%
\usepackage[section]{placeins}%
\usepackage{float}%
\usepackage{enumitem}%

\theoremstyle{thmstyleone}%
\newtheorem{theorem}{Theorem}%
\newtheorem{proposition}[theorem]{Proposition}%
\newtheorem{lemma}[theorem]{Lemma}%
\newtheorem{corollary}[theorem]{Corollary}%

\newtheorem{remark}[theorem]{Remark}%

\theoremstyle{thmstylethree}%
\newcommand{\E}{\mathbb{E}}
\newcommand{\Prob}{\mathbb{P}}

\newcommand{\ind}{\mathbf{1}}
\newcommand{\dd}{\mathrm{d}}

\newcommand{\defeq}{\coloneqq}

\newcommand{\abs}[1]{\left\lvert #1 \right\rvert}

\newcommand{\Pmicro}{P^{\mathrm{micro}}}
\newcommand{\Plast}{P^{\mathrm{last}}}
\newcommand{\qmin}{q_{\min}}
\newcommand{\Dtot}{D_{\mathrm{tot}}}

\newcommand{\tdec}{t^{\mathrm{dec}}}
\newcommand{\Nsteps}{N_{\mathrm{steps}}}
\newcommand{\Nmax}{N_{\max}}
\newcommand{\dmax}{\delta_{\max}}
\newcommand{\Dtmin}{\Delta_{\min}^{\mathrm{dec}}}

\begin{document}

\title[The Impact of Market Informedness on Market Makers' Profitability]{The Impact of Market Informedness on Market Makers' Profitability}

\author*[1]{\fnm{Konrad} \sur{Ochedzan}}
\email{konrad.ochedzan@gmail.com}
\author[2,3]{\fnm{Nino} \sur{Antulov-Fantulin}}

\affil[1]{\orgdiv{Master of Science in Quantitative Finance}, \orgname{University of Zurich and ETH Zurich}, \orgaddress{\city{Zurich}, \country{Switzerland}}}
\affil[2]{\orgdiv{Computational Social Science, Department of Humanities, Social and Political Sciences}, \orgname{ETH Zurich}, \orgaddress{\city{Zurich}, \country{Switzerland}}}
\affil[3]{\orgname{Aisot Technologies}, \orgaddress{\city{Zurich}, \country{Switzerland}}}

\abstract{This paper studies how market informedness affects market makers’ profitability in a computational market environment with heterogeneous learning agents. We develop an agent-based market model in which market makers differ in their information sets and inventory-risk aversion, prices form endogenously, fundamental values evolve exogenously, and market-taker order flow follows a state-dependent self-exciting process.
The model provides a controlled computational laboratory for analyzing the interaction between informed trading, adverse selection, price discovery, and liquidity provision. We establish finite-horizon stability properties of the market-taker order-flow process and solve the market-making problem using multi-agent reinforcement learning with centralized training and decentralized execution.
The results show that informed market order flow is particularly harmful when aggregate market informedness is low, exposing market makers to severe adverse-selection risk. However, as market informedness increases, market-maker profitability displays an overall upward trend despite local non-monotonicities arising from complex market dynamics and stochastic learning. This suggests that the price-discovery benefits of informed trading can offset its adverse-selection costs.
The findings contribute to computational economics by showing how agent heterogeneity, endogenous price formation, and learning-based liquidity provision jointly shape market outcomes.}

\keywords{market microstructure, market making, limit order book, informed trading, price discovery, reinforcement learning, Hawkes processes}

\maketitle
\section{Introduction}

Modern financial markets are dynamic systems in which liquidity demand, liquidity supply, information, and strategic behavior interact at high frequency. At coarse time scales, asset prices are often represented by diffusion-type models, with the Black--Scholes framework providing a canonical benchmark for continuous-time price dynamics (\cite{BS1973}). At the transaction level, however, the relevant object is the market microstructure: trades and quotes are discrete, prices move in ticks, and the state of the market is summarized by the limit order book. This micro-level structure can generate volatility, clustering, and temporary deviations between transaction prices and latent economic value that are not visible in aggregate price models.

A central difficulty in modeling such markets is that order flow is both strategically and statistically heterogeneous. Behavioral feedback effects such as herding and belief-driven trading can generate bursts of activity and short-lived price dislocations (\cite{SH2000,GS2018}). Informational heterogeneity creates a separate but related mechanism: some traders possess, infer, or react more quickly to information about the asset's fundamental value, while others primarily condition on observable market data. This gives rise to adverse selection because liquidity providers are more likely to trade against informed market orders when quoted prices are temporarily misaligned with fundamentals (\cite{GLOSTENMILGROM1985,CAMPI2020,BMB2025}).

These mechanisms are especially important for market makers. Market makers supply liquidity by posting bid and ask limit orders, earn compensation through the bid--ask spread and exchange rebates, and face losses from inventory accumulation and adverse selection. Their profitability therefore depends not only on their own quoting policy, but also on the informational content of order flow, the endogenous response of prices, and the behavior of competing liquidity providers. This interaction is difficult to capture analytically because realistic market-making environments combine a discrete limit order book, clustered and size-heterogeneous market orders, partial information about fundamentals, and strategic competition among multiple agents.

This paper studies these interactions in a computational market environment. The model contains heterogeneous market makers with different information sets and inventory-risk aversion, a latent fundamental value, endogenous price formation through the limit order book, and market-taker order flow generated by a state-dependent marked Hawkes mechanism. The resulting environment is used as a controlled computational laboratory for analyzing how market informedness affects market-maker profitability through the joint channels of adverse selection and price discovery.

The learning component is based on reinforcement learning, which is well suited to sequential decision problems in large state spaces where closed-form optimal controls are unavailable (\cite{SZEPESVARI2010,SCHULMAN2017PPO,SUTTONBARTO2018}). Because the environment contains multiple strategic liquidity providers, the agents are trained using a multi-agent approach with centralized training and decentralized execution, a paradigm developed for interacting decision makers with non-stationary learning dynamics (\cite{LOWE2017MADDPG_CTDE,YU2022MAPPO}). This allows market makers to learn quoting policies from repeated interaction with the simulated market rather than from an exogenously specified rule.

The main finding is that market informedness has a non-trivial effect on liquidity provision. Informed market order flow is most damaging when aggregate market informedness is low, because market makers are then exposed to adverse-selection losses without sufficient price discovery. As aggregate informedness increases, market-maker profitability exhibits an overall upward trend despite local non-monotonicities caused by complex market dynamics and stochastic learning. The results suggest that the price-discovery benefits of informed trading can offset, and eventually dominate, its adverse-selection costs in a sufficiently informative market.

\section{Related Literature}

This paper relates first to the market-microstructure literature on spreads, dealer behavior, and liquidity provision. Classic models explain the bid--ask spread as compensation for transaction costs, immediacy services, and inventory risk faced by dealers and liquidity suppliers (\cite{DEMSETZ1968,TINIC1972,GARMAN1976,STOLL1978,AMIHUDMENDELSON1980,HOSTOLL1981}). Related work studies order-placement incentives and competition among liquidity suppliers, showing that the spread and execution outcomes depend on the strategic structure of the trading mechanism (\cite{CMSW1981,HOSTOL1983,MILDENSTEINSCHLEEF1983}). Information-based models add an adverse-selection component to the spread by showing that market makers protect themselves against the possibility of trading with better-informed counterparties (\cite{COPELANDGALAI1983,GLOSTENMILGROM1985}).

A second strand studies optimal market making and trading under inventory risk, signals, and partial information. The Avellaneda--Stoikov framework formulates high-frequency market making as an inventory-sensitive quoting problem around a reference price (\cite{AVELLANEDASTOIKOV2008}). Subsequent models incorporate order-flow signals, alpha signals, temporary and transient price impact, and speculative trading motives into optimal execution and market-making problems (\cite{CARTEAJAIMUNGAL2016,LEHALLENEUMAN2019,CARTEAWANG2020,DRISSI2022,NEUMANVOSS2022,BANKCARTEAKORBER2023}). Partial-information formulations explicitly connect market-making decisions to latent states and stochastic intensities, while recent work with fads, informed traders, and uninformed traders studies how the composition of market takers affects market-maker performance (\cite{CAMPI2020,BMB2025}). The present paper differs from this line by making transaction prices endogenous to the interaction of several heterogeneous market makers and a state-dependent stream of market orders.

The paper also builds on statistical models of order-flow dynamics. Hawkes processes provide a tractable representation of self-exciting arrivals, where one event temporarily increases the likelihood of subsequent events (\cite{HA1971}). Intensity-based point-process models have been used to describe security-market events in continuous time (\cite{BOWSHER2007}). Empirical studies of limit order books document clustered activity, dependence in order signs, and rich volume dynamics (\cite{CHAKRABORTITOKEETAL2011,GOULDEtAl2013}). Hawkes models have been applied to high-frequency market data and systemic price co-jumps, capturing temporal clustering beyond Poisson order-arrival assumptions (\cite{BORMETTIETAL2015,FILIMONOVSORNette2015}). Marked and multivariate extensions incorporate order size and cross-excitation across event types, which is important because large market orders can walk the book and generate discontinuous quote changes (\cite{RMB2017,JAIN2024}). Our model uses this tradition to generate clustered and size-heterogeneous market-taking events, but modulates their intensities by the relationship between the market price and a latent fundamental value.

Finally, the paper contributes to computational and agent-based approaches to financial markets. Agent-based market simulations have been used to study how heterogeneous traders and trading rules can generate aggregate market phenomena that are difficult to derive in closed form (\cite{PADDRIKETAL2012CIFER}). Algorithmic-trading references provide the broader market-design and execution context for these simulation-based approaches (\cite{CARTEAJAIMUNGALPENALVA2015_AHFT}). Reinforcement learning provides a general framework for sequential decision-making and policy optimization in complex environments (\cite{SZEPESVARI2010,SCHULMAN2017PPO,SUTTONBARTO2018}). Multi-agent reinforcement learning extends this framework to interacting agents whose learning problems are coupled through the environment and through each other's policies (\cite{LOWE2017MADDPG_CTDE,YU2022MAPPO}). Recent applications to market making and limit order books show that learning-based agents can be used to analyze quoting behavior, equilibrium formation, and endogenous price dynamics in simulated markets (\cite{Gasperov2021RLMM,HeLin2022RLEquilibriumLOB}). Closely related computational market environments combine deep learning, queue-reactive order-book simulation, and multi-agent reinforcement learning to study realistic limit-order-book behavior (\cite{BODORCARLIER2025,cheridito2025abidesmarl}). Relative to these studies, the present paper focuses on how market informedness changes the balance between adverse selection and price discovery for heterogeneous learning market makers.

\section{Contributions}
This paper makes three contributions.

First, it develops a computational market environment with heterogeneous market-making agents, endogenous price formation, state-dependent self-exciting order flow, and learning-based decision-making.

Second, it provides stability guarantees for the resulting market-taker process, ensuring that the simulated market dynamics are well defined over finite horizons.

Third, it uses this environment to study how market informedness affects liquidity provision. The results show that informed order flow is most harmful in poorly informed markets, while higher aggregate informedness can improve market-maker profitability through better price discovery.

\section{Computational Market Environment}

This section describes the simulation environment used throughout the paper. The objective is to obtain an order-driven market in which (i) liquidity demand arrives in continuous time with empirically realistic clustering, (ii) trades have heterogeneous sizes, and (iii) arrival rates react to both microstructure conditions (spread, depth) and mispricing relative to an exogenous fundamental value. 

\subsection{Notations}

Market makers act on a fixed decision grid  and interact through a shared limit order book (LOB), while market takers generate trades in event time. Let $0=t^{\mathrm{dec}}_0<t^{\mathrm{dec}}_1<\dots<t^{\mathrm{dec}}_{N_{\text{steps}}}=T$ denote agents' decision times. We assume the following notations.

Orders end up in a \emph{queue} which accounts for the order of limit orders. That leads to a situation, where at each point of time there is a finite list of available prices at each side. We denote a list of open asks' price levels at time $t$ as $\mathcal{A}_t$ and bids' as $\mathcal{B}_t$. In the LOB for each price $p$ the book maintains a first-in-first-out (FIFO) queue (price--time priority) of resting orders:
\begin{align}
\mathcal{Q}^b_t(p)=\big((j^b_{1,\ p},q^b_{1,\ p}),(j^b_{2,\ p},q^b_{2,\ p}),\dots, (j^b_{n_p,\ p},q^b_{n_p,\ p})\big)_t,\\
\mathcal{Q}^a_t(p)=\big((j^a_{1,\ p},q^a_{1,\ p}),(j^a_{2,\ p},q^a_{2,\ p}),\dots,(j^a_{n_p,\ p},q^a_{n_p,\ p})\big)_t,
\label{def:queue}
\end{align}

where $(j^x_{m,\ p},q^x_{m,\ p})$ denotes an order of size $q^x_{m,\ p}$ posted by entity $j^x_{m,\ p}$ on side $x\in\{\text{bid}, \text{ask}\}$ and price level $p$. Importantly, all prices live on a fixed tick grid with tick size $\Delta p$. For the simulation purposes to make sense out of agents actions and prevent floating point drift, every price that enters the book is first \emph{canonicalized} by rounding to the appropriate number of decimals implied by $\Delta p$. This guarantees that identical economic price levels map to the same dictionary key. Throughout, we denote by $\mathrm{canon}(\cdot)$ this rounding operation. Thanks to that, in cases of odd length spread, the offers stay on the tick grid. 

Whenever a market order fills a limit order, the market maker who quoted that offer may (in the implementation does) receive a rebate $R_{\text{rebate}} >0$. That serves as additional motivation to take the risks of market making.

The \emph{best bid} is the highest available buy price; the \emph{best ask} is the lowest available sell price. Their difference is the \emph{bid--ask spread}, and their average is the \emph{mid-price}. The total available quantity at each level constitutes the \emph{depth}. In this paper the best bid is denoted as $B_t=\max\{p:\,p\in\mathcal{B}_t\}$ and the best ask as $A_t=\min\{p:\,p\in\mathcal{A}_t\}$. The liquidity at best ask and liquidity at best bid are denoted respectively as $Q^a_t = \sum_{(j_i,q_i) \in {Q}^a_t(A_t)}q_i $ and $Q^b_t = \sum_{(j_i,q_i) \in {Q}^b_t(B_t)}q_i$. 

We model the market price defined as last transaction price $P^{\mathrm{last}}_t$ as the last price at which a trade occurred before or at time $t$.  Market mid is defined as:
\begin{align}
    M_t = \frac{B_t + A_t}{2},
\end{align}
and market micro-price as:
\begin{align}
P^{\mathrm{micro}}_t=\frac{A_t\,Q^b_t + B_t\,Q^a_t}{Q^a_t+Q^b_t},
\label{eq:microprice}
\end{align}
whenever both sides of the book are non-empty. If either side is empty, we fallback to the last transaction price $P^{\mathrm{last}}_t$ which is the last price at which a trade occurred before or at time $t$.

Each market maker aggregates both inventory and cash which for agent $i$ at time $t$ we denote as respectively  $q_t^{(i)}$ and $c_t^{(i)}$.

\subsection{Fundamental value}\label{sec:fundamental_value}
The fundamental value is modeled as an exogenous process. Historically, asset prices have often been modeled by geometric Brownian motion or by richer specifications designed to reproduce stylized volatility patterns. The present paper, however, focuses on intraday market microstructure rather than on long-horizon asset pricing. For this reason, and similarly to the role played by the reference-price process in \cite{AVELLANEDASTOIKOV2008}, we model the fundamental value by a standard Brownian motion. This choice keeps the specification parsimonious and is appropriate for the short horizons considered here, over which both drift and the risk-free rate can be treated as negligible. In addition, introducing a process with an average one-sided trend would create stronger incentives for agents to take directional bets, thereby generating inventory and wealth distributions that are less representative of intraday market making. Therefore, we model the fundamental as a Brownian motion with zero drift:
\[
S_t = S_0 + \sigma W_t,
\]
where $W_t$ is a Brownian motion, $S_0$ is the initial price, and $\sigma >0$ is the volatility. Since an arithmetic Brownian motion can in principle cross zero, the simulator imposes an economically natural lower bound at zero: in the implementation, zero is treated as a hard floor and the fundamental price is not allowed to fall below it.

We also introduce a notion of a ``noisy estimator'' of the fundamental value so that some agents can be given additional information about the market. This noisy estimator is denoted by $\widetilde{S}_t^{(i)}$ and defined as $\widetilde{S}_t^{(i)} = S_t + \varepsilon_t^{(i)}$, where $\varepsilon_t^{(i)} \sim \mathcal{N}(0,\sigma_{signal}^{(i)})$ is an exogenous, agent-specific noise term.

\subsection{Market takers}

We model market taking events as a multivariate marked point process. Following \cite{RMB2017}, the mark jointly specifies the trade direction (consuming the ask side or the bid side) and a discrete size bin. Let $K$ denote the number of size bins on each side and define $C=2K$. We index event types by
\[
i \in \{1,\dots,K\}\quad\text{(market buy events consuming asks)}, 
\]
\[
i \in \{K+1,\dots,2K\}\quad\text{(market sell events consuming bids)}.
\]
For each type $i$ we introduce a counting process $N_i(t)$ and an $\mathcal{F}_t$-predictable intensity $\lambda_i(t)$, where $\mathcal{F}_t$ is the natural filtration of the process.

\paragraph{Raw Hawkes dynamics.}
To capture self-excitation, we first define a \emph{raw} Hawkes intensity vector $\boldsymbol{\lambda}^{H}(t)=(\lambda^{H}_1(t),\dots,\lambda^{H}_{C}(t))^\top$ as in \cite{HA1971}:
\begin{equation}
\lambda^{H}_i(t)
=
\mu_i
+
\sum_{j=1}^{C}\int_{0}^{t}\phi_{i\leftarrow j}(t-s)\,\mathrm{d}N_j(s),
\qquad i=1,\dots,C,
\label{eq:hawkes_raw_vector}
\end{equation}
with baseline rates $\mu_i>0$ and excitation kernels $\phi_{i\leftarrow j}(\cdot)\ge 0$. As standard in microstructure Hawkes modeling and consistent with \cite{RMB2017}, we assume a sum-of-exponentials parameterization:
\begin{equation}
\phi_{i\leftarrow j}(u)=\sum_{r=1}^{R}\alpha_{ijr}\,e^{-\beta_{ijr}u},
\qquad u\ge 0,
\label{eq:exp_kernel}
\end{equation}
where $\alpha_{ijr}\ge 0$ and $\beta_{ijr}>0$. This specification yields a piecewise-deterministic representation via auxiliary excitation states
\[
Z_{ijr}(t)=\int_{0}^{t}\alpha_{ijr}e^{-\beta_{ijr}(t-s)}\,\mathrm{d}N_j(s),
\qquad
\lambda^{H}_i(t)=\mu_i+\sum_{j=1}^{C}\sum_{r=1}^{R}Z_{ijr}(t),
\]
with exponential decay between jumps and instantaneous increments at event times.

\paragraph{Microstructure and information modulation.}
The raw Hawkes dynamics account for clustering and cross-excitation across sizes and sides, but do not incorporate the dependence of trade arrivals on current market conditions or on mispricing relative to fundamentals. We introduce this dependence through a multiplicative gain vector $G(x(t))\in\mathbb{R}_+^{C}$, where $x(t)$ is a collection of the LOB and fundamental-value features. We then construct:

\begin{align}
\delta^a_t
&\defeq A_t-P^{\mathrm{micro}}_t \ge 0,
\qquad
\delta^b_t
\defeq
P^{\mathrm{micro}}_t-B_t \ge 0,\qquad
I_t
\defeq S_t-P^{\mathrm{micro}}_t.
\label{eq:deltas}
\end{align}

The gain $G$ reflects two effects: (i) sensitivity to instantaneous spread conditions captured by $\delta^a_t,\delta^b_t$ and (ii) information-driven trading pressure when the observed market price deviates from the fundamental value via $I_t$. Following the mixture-of-informed approach in \cite{BMB2025}, we assume that a fraction $\psi\in[0,1]$ controls the market informedness. We define spread sensitivity $\eta > 0$ and mispricing sensitivity $\zeta >0$ equal for both sides and all size bins. We then set the uninformed and informed gains as
\begin{align}
g^{U,a}(x(t)) &= \exp\!\big(-\eta\,\delta^a_t\big), &
g^{I,a}(x(t)) &= \exp\!\big(-\eta\,\delta^a_t + \zeta\,I_t\big), \label{eq:gains_ask}\\
g^{U,b}(x(t)) &= \exp\!\big(-\eta\,\delta^b_t\big), &
g^{I,b}(x(t)) &= \exp\!\big(-\eta\,\delta^b_t - \zeta\,I_t\big). \label{eq:gains_bid}
\end{align}
The effective gains (mixture over trader types) are
\begin{equation}
G^a(x(t))=\psi\,g^{I,a}(x(t))+(1-\psi)\,g^{U,a}(x(t)),
\qquad
G^b(x(t))=\psi\,g^{I,b}(x(t))+(1-\psi)\,g^{U,b}(x(t)),
\label{eq:gains_mixture}
\end{equation}
and we construct $G(x(t))\in\mathbb{R}^C$ by concatenation:
\[
G_i(x(t))=
\begin{cases}
G^a(x(t)), & i=1,\dots,K,\\
G^b(x(t)), & i=K+1,\dots,2K.
\end{cases}
\]

If one side of the book is empty, then the corresponding gains are set to zero.

The \emph{effective} intensity used for simulation is:
\begin{equation}
\lambda_i(t) = G_i(x(t))\,\lambda^{H}_i(t),
\qquad i=1,\dots,C,
\label{eq:effective_intensity}
\end{equation}
and the total event rate is
\begin{equation}
\Lambda_{\mathrm{eff}}(t)=\sum_{i=1}^{C}\lambda_i(t)=\sum_{i=1}^{C}G_i(x(t))\,\lambda_i^{H}(t).
\label{eq:lambda_total}
\end{equation}

That notion is a natural extension of \cite{BMB2025} where we would take $C=2$ for only unitary order volumes at both sides, null Hawkes kernels and constant baseline Hakwes intensities $\mu_i$. Therefore, this research serves as an extension in a more general setup to the aforementioned paper.

\paragraph{Size realization within bins.}
An event type determines only the side and the size bin. Conditional on an event in bin $k$, the executed quantity is sampled as an integer uniformly from a pre-specified range associated with that bin. This binning is the mechanism that produces heterogeneous order sizes while keeping the Hawkes dimension finite and relatively condensed, as in \cite{RMB2017}.

\paragraph{Warm start}
To avoid an artificial burn-in period at the beginning of each episode, we initialize the raw Hawkes state in a stationary regime of the linear Hawkes model \eqref{eq:hawkes_raw_vector}--\eqref{eq:exp_kernel}. Let $A\in\mathbb{R}^{C\times C}$ be the integrated kernel matrix with entries
\begin{equation}
A_{ij}=\int_{0}^{\infty}\phi_{i\leftarrow j}(u)\,\mathrm{d}u
=\sum_{r=1}^{R}\frac{\alpha_{ijr}}{\beta_{ijr}}.
\label{eq:integrated_kernel}
\end{equation}

If the spectral radius $\rho(A)<1$, the stationary mean intensity $\boldsymbol{\lambda}^*=\mathbb{E}[\boldsymbol{\lambda}^H(t)]$ exists and satisfies
\[
\boldsymbol{\lambda}^{*}
=
\boldsymbol{\mu}+A\,\boldsymbol{\lambda}^{*},
\qquad\text{hence}\qquad
\boldsymbol{\lambda}^{*}=(I-A)^{-1}\boldsymbol{\mu}.
\]
In the simulation, we set $\boldsymbol{\lambda}^{H}(0):=\boldsymbol{\lambda}^*$ and initialize the excitation variables $Z_{ijr}(0)$ so that $\lambda_i^{H}(0)=\mu_i+\sum_{j,r}Z_{ijr}(0)$. Concretely, we place the ``excess over baseline'' $\lambda_i^*-\mu_i$ on the diagonal $j=i$ and split it evenly across the $R$ exponential components:
\[
Z_{ijr}(0)=
\begin{cases}
\dfrac{\lambda_i^{*}-\mu_i}{R}, & i=j,\\[4pt]
0, & i\neq j.
\end{cases}
\]
This choice ensures $\lambda_i^{H}(0)=\lambda_i^*$ while keeping the initialization simple and reproducible.

\paragraph{Continuous time simulation}

Market-taker arrivals are simulated in continuous time using an event-driven procedure based on Ogata's thinning / time-change methodology (\cite{OGATA1981,DALEYVEREJONES2003}). The market makers, in contrast, act on a fixed decision grid. Over each interval $(t^{\mathrm{dec}}_n,t^{\mathrm{dec}}_{n+1})$, where $t^{\mathrm{dec}}_n$ is agents' n-th decision time, agents do not submit new quotes, so the control is held fixed. However, taker trades occurring within the interval consume resting depth and may change the best quotes and queue sizes. Accordingly, the effective intensities are piecewise defined between event times: between consecutive events the raw Hawkes state decays deterministically, while after each executed trade the LOB-derived multiplier $G(x(t))$ is recomputed from the updated book.

after each event we may draw $S\sim\mathrm{Exp}(1)$ and define the next event time as the solution of
\begin{equation}
\int_{t}^{t_{\mathrm{next}}}\Lambda_{\mathrm{eff}}(s)\,\mathrm{d}s = S.
\label{eq:time_change_equation}
\end{equation}

Between events, the Hawkes excitation variables decay exponentially:
\[
Z_{ijr}(t+u)=Z_{ijr}(t)\,e^{-\beta_{ijr}u},\qquad u\ge 0,
\]
so the raw intensities evolve as
\[
\lambda_i^{H}(t+u)=\mu_i+\sum_{j=1}^{C}\sum_{r=1}^{R}Z_{ijr}(t)\,e^{-\beta_{ijr}u}.
\]
Within a time segment over which $G(x(\cdot))$ is held fixed, the integrated effective intensity admits a closed form. Denoting the segment length by $\Delta$, we obtain
\begin{align}
\int_{t}^{t+\Delta}\Lambda_{\mathrm{eff}}(s)\,\mathrm{d}s
&=
\sum_{i=1}^{C}G_i(x(t))
\left[
\mu_i\,\Delta
+
\sum_{j=1}^{C}\sum_{r=1}^{R}
Z_{ijr}(t)\,
\frac{1-e^{-\beta_{ijr}\Delta}}{\beta_{ijr}}
\right].
\label{eq:integrated_intensity_closed_form}
\end{align}
Therefore, on each segment the mapping $\Delta\mapsto \int_{t}^{t+\Delta}\Lambda_{\mathrm{eff}}(s)\,\mathrm{d}s$ is continuous and strictly increasing, which makes \eqref{eq:time_change_equation} well-posed. In the implementation, the root-finding step is performed by bisection on $\Delta$ using \eqref{eq:integrated_intensity_closed_form}.

We do not redraw the exponential threshold at every market-maker decision time. Instead, we maintain a remaining ``area to the next event'' $S^{\mathrm{rem}}$ corresponding to the left-hand side of \eqref{eq:time_change_equation}. On each time segment $[t,t_{\mathrm{end}}]$ over which the multiplier $G(x(\cdot))$ is held fixed and where $t_{\text{end}}$ stands for the next agents' decision step, we define the integrated hazard by
\[
H(t,t_{\mathrm{end}})
:=
\int_t^{t_{\mathrm{end}}}\Lambda_{\mathrm{eff}}(s)\,\mathrm{d}s.
\]

Thus, $H(t,t_{\mathrm{end}})$ is precisely the cumulative compensator accumulated over the current segment.
\begin{itemize}
\item If $H(t,t_{\mathrm{end}}) < S^{\mathrm{rem}}$, then no trade occurs before $t_{\mathrm{end}}$. We advance the clock to $t_{\mathrm{end}}$, update $G$ according to market-maker actions at $t_{\mathrm{end}}$, set $S^{\mathrm{rem}}\leftarrow S^{\mathrm{rem}}-H(t,t_{\mathrm{end}})$, and proceed.
\item If $H(t,t_{\mathrm{end}}) \ge S^{\mathrm{rem}}$, then a trade occurs inside $(t,t_{\mathrm{end}}]$. We solve for $t_{\mathrm{next}}$ such that \newline $\int_t^{t_{\mathrm{next}}}\Lambda_{\mathrm{eff}}(s)\,\mathrm{d}s=S^{\mathrm{rem}}$, advance to $t_{\mathrm{next}}$, and sample the event type. The Hawkes state is then updated by applying a jump in the realized dimension:
\[
Z_{ijr}(t_{\mathrm{next}}^+) = Z_{ijr}(t_{\mathrm{next}}) + \alpha_{ijr}\,\mathbf{1}\{j=\text{realized type}\}.
\]
After executing the corresponding market order in the LOB, we draw a fresh $S^{\mathrm{rem}}\sim\mathrm{Exp}(1)$ representing the waiting time for the next taker event and continue.
\end{itemize}
If the next market-maker decision time lies before the unconstrained solution to \eqref{eq:time_change_equation}, the procedure above naturally splits the integrated intensity across segments and carries $S^{\mathrm{rem}}$ forward. This coupling ensures that market-taker trades occur in true continuous time, while market-maker actions remain on the fixed decision grid, and the exponential time-change mechanism stays consistent across all regime and state updates.

\section{Stability of the modulated Hawkes mechanism}\label{sec:hawkes_stability}

The multiplicative gains in \eqref{eq:gains_mixture} make the market-taker process state dependent and potentially much more active than a standard linear Hawkes process. In particular, when the microprice is below the fundamental value, buy-side gains are amplified, and when the microprice is above the fundamental value, sell-side gains are amplified. This raises a stability question: can the exponential gain in $|I_t|$ lead to explosive order arrivals or uncontrolled mispricing within one episode? The following results show that, on finite horizons, the finite depth of the LOB provides a stabilizing feedback mechanism. The amplified side consumes finite resting depth and therefore cannot generate infinitely many trades before the next decision time. Detailed proofs and auxiliary lemmas are given in Appendix~\ref{app:stability_proofs}.

For the stability statements, call an event type \emph{corrective} at time $t$ if it trades in the direction that is favored by the sign of the pre-jump mispricing: buy types $i\le K$ are corrective when $I_{t^-}\ge0$, and sell types $i>K$ are corrective when $I_{t^-}<0$. The other types are called \emph{destabilizing}. Let $T_{\mathrm{ecs}}$ denote the total time during $[0,T]$ at which the corrective side of the book is empty:
\begin{equation}
  T_{\mathrm{ecs}}
  :=
  \int_0^T
  \ind_{\{\text{corrective side empty at }t^-\}}\,\dd t .
  \label{eq:tecs}
\end{equation}

\paragraph{Assumptions for the stability analysis.}\label{par:stability_conditions}
For the stability results in this subsection, we work on a fixed finite horizon $[0,T]$ and use the following assumptions. These assumptions are in accordance with the market environment described above: they formalize the depth, execution, quote-range, and fundamental-value specifications imposed by the simulator, rather than additional restrictions on the learned quoting policies. Assumptions \textup{(S1)--(S6)} are sufficient for non-explosion and the exponential mispricing bounds; the pathwise tail bound additionally uses the Brownian fundamental specification in \textup{(S7)}.
\begin{description}[leftmargin=2.7em,style=nextline]
\item[(S1) Positive gain parameters.]
The gain parameters satisfy $\eta>0$, $\zeta>0$, and $\psi\in(0,1]$.
\item[(S2) Finite posted depth.]
At every decision time $\tdec_n$, the total resting quantity on each side of the book is bounded by a deterministic constant $\Dtot<\infty$. This bound is on aggregate depth across all market makers.
\item[(S3) Minimum executable size.]
There is a constant $\qmin>0$ such that every executed market order removes at least $\qmin$ units from the corresponding side. If less than $\qmin$ units are available on a side, then no market order against that side is executed.
\item[(S4) Empty-side shutdown.]
An empty side cannot be hit by further market orders: if the ask side is empty, then $G^a=0$, and if the bid side is empty, then $G^b=0$.
\item[(S5) Finite decision grid.]
The number of market-maker decision times is finite and denoted by $\Nsteps$. When a rate per unit time is required, the grid is assumed to satisfy $\tdec_{n+1}-\tdec_n\ge\Dtmin>0$.
\item[(S6) Bounded quote distances.]
There exists $\dmax<\infty$ such that, whenever the relevant side of the book is non-empty,
\[
  0\le\delta^a_t\le\dmax,
  \qquad
  0\le\delta^b_t\le\dmax .
\]
In the implementation, this bound is maintained by restricting the admissible posting range and cancelling quotes that move outside it after continuous-time taker trading.
\item[(S7) Brownian fundamental value.]
The fundamental value follows the Brownian specification introduced in Section~\ref{sec:fundamental_value},
\[
  S_t=S_0+\sigma W_t,
  \qquad \sigma>0,
\]
where $W$ is a standard Brownian motion.
\end{description}

\begin{proposition}[Non-explosion]\label{prop:nonexplosion}
Set
\begin{equation}
  \Nmax \defeq \frac{2\Dtot\Nsteps}{\qmin}.
  \label{eq:Nmax}
\end{equation}
Under Assumptions \textup{(S1)--(S6)}, the total number of market-taker events satisfies
\begin{equation}
  N([0,T]) \defeq \sum_{i=1}^{C}N_i([0,T]) \le \Nmax
  \qquad\text{a.s.}
  \label{eq:nonexplosion_bound}
\end{equation}
In particular, $\E[N([0,T])]\le\Nmax$.
\end{proposition}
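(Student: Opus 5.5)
The proof will be a pathwise counting argument that uses only the finite depth of the book, and not the form of the intensities. I would split $[0,T]$ into the decision intervals $[\tdec_n,\tdec_{n+1})$, $n=0,\dots,\Nsteps-1$. On each interval the market makers do not post new quotes, by the description of the continuous-time simulation and (S5). So within one interval resting liquidity on each side can only be removed, by taker executions or by the cancellations of (S6). It is never replenished.

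\textbf{Step 1: per-side depth at the start of an interval.} By (S2), at $\tdec_n$ the ask side holds total quantity $D^a_n\le\Dtot$ and the bid side holds $D^b_n\le\Dtot$.

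\textbf{Step 2: each execution removes a fixed minimum.} By (S3), every market buy executed in $(\tdec_n,\tdec_{n+1})$ removes at least $\qmin$ units from the ask side. Since the ask quantity is nonincreasing on the interval and nonnegative, at most $\lfloor D^a_n/\qmin\rfloor\le\Dtot/\qmin$ buy events can execute there. Once fewer than $\qmin$ units remain, no further buy can execute. If the side becomes empty, (S4) sets $G^a=0$, hence $\lambda_i=0$ for $i\le K$, so no buy event is even generated. The same bound holds for sells on the bid side. Therefore each interval contains at most $2\Dtot/\qmin$ events.

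\textbf{Step 3: sum over intervals.} Summing over the $\Nsteps$ intervals gives $N([0,T])\le 2\Dtot\Nsteps/\qmin=\Nmax$ pathwise on every realization. In particular the bound holds a.s., and taking expectations gives $\E[N([0,T])]\le\Nmax$. I would remark that the Hawkes excitation and the unbounded exponential gain in $|I_t|$ play no role: they can only speed events up, not increase their number beyond the depth budget. Assumptions (S1) and (S6) are needed only so that $\lambda$ is well defined and finite between events, which rules out an accumulation of event times within a segment.

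\textbf{Main obstacle.} The delicate point is the convention for events that arrive when the targeted side holds positive but fewer than $\qmin$ units, or when a market order is larger than the available depth. Such an event must count as an event of $N$ and yet remove at least $\qmin$ units. I would handle this through the second clause of (S3): such events are not executed and hence not counted as market-taker events. I would also handle the boundary at decision times, assigning an event at exactly $\tdec_{n+1}$ to exactly one interval (the half-open convention above). The identification of $N_i$ with executed events, rather than proposed thinning points, should be stated explicitly.
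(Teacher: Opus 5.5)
Your proposal is correct and follows the paper's proof essentially step for step: you fix an inter-decision interval, use (S2) for the initial per-side depth, and use (S3)--(S4) together with the absence of reposting before the next decision time to bound executed buys and sells by $\Dtot/\qmin$ each, then sum over the $\Nsteps$ intervals. Your explicit convention that $N_i$ counts only executed orders, so that arrivals against a side holding fewer than $\qmin$ units do not count, is left implicit in the paper, and it is exactly what makes the depth argument airtight.
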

Economically, finite posted depth, minimum trade sizes, and empty-side shutdowns prevent an infinite cascade of market orders over a finite trading horizon. This is usefull in mathematical modelling of the market as well as grants a decent bound on what maximal level of activity can be expected under given market conditions.
Proposition~\ref{prop:nonexplosion} is an extension of standard non-explosion results for Hawkes processes to the environment presented in this paper. Let $\mu_{\min}:=\min_{1\le i\le C}\mu_i$ and
\begin{equation}
  \gamma := \psi\,e^{-\eta\dmax}\,\mu_{\min}.
  \label{eq:gamma}
\end{equation}
This is the uniform coefficient in the lower bound for corrective intensity during periods in which the corrective side of the book is non-empty.

\begin{proposition}[Exponential mispricing integrability]\label{prop:exp_int}
Under Assumptions \textup{(S1)--(S6)},
\begin{equation}
  \E\!\left[\int_0^T e^{\zeta\abs{I_{t^-}}}
    \ind_{\{\text{corrective side non-empty at }t^-\}}\,\dd t\right]
  \le \frac{\Nmax}{\gamma}.
  \label{eq:exp_int}
\end{equation}
Moreover, on the full interval,
\begin{equation}
  \E\!\left[\int_0^T e^{\zeta\abs{I_{t^-}}}\,\dd t\right]
  \le
  \frac{\Nmax}{\gamma}
  +
  \E\!\left[e^{\zeta\sup_{t\le T}\abs{I_{t^-}}}T_{\mathrm{ecs}}\right].
  \label{eq:exp_int_tecs}
\end{equation}
The second term is the contribution of periods during which the corrective side is empty. 
\end{proposition}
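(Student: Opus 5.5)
The plan is to combine a compensator (Doob–Meyer) identity for the counting process of corrective events with the almost-sure bound of Proposition~\ref{prop:nonexplosion}. Let $N^{\mathrm{corr}}(t)$ count the market-taker events up to $t$ whose type is corrective at its event time, i.e.\ buy types $i\le K$ occurring when $I_{s^-}\ge 0$ and sell types $i>K$ occurring when $I_{s^-}<0$. Its $\mathcal{F}_t$-intensity is
\[
\lambda^{\mathrm{corr}}(t)=\ind_{\{I_{t^-}\ge0\}}\sum_{i\le K}\lambda_i(t)+\ind_{\{I_{t^-}<0\}}\sum_{i>K}\lambda_i(t).
\]
This holds because the indicator of the sign of $I_{t^-}$ is predictable and each $\lambda_i$ is the $\mathcal{F}_t$-predictable intensity of $N_i$, as recalled in the model section. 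Since $N^{\mathrm{corr}}(T)\le N([0,T])\le \Nmax$ a.s., the process $N^{\mathrm{corr}}$ is bounded. Hence $N^{\mathrm{corr}}(t)-\int_0^t\lambda^{\mathrm{corr}}(s)\,\dd s$ is a true martingale, and we obtain
\[
\E\!\left[\int_0^T\lambda^{\mathrm{corr}}(s)\,\dd s\right]=\E[N^{\mathrm{corr}}(T)]\le\Nmax .
\]
To be careful, I would first localize with stopping times $\tau_m$ (the $m$-th event time, or the first time the compensator exceeds $m$). I would then pass to the limit by monotone convergence, using nonnegativity of the integrand.

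The key step is a pointwise lower bound on the corrective intensity when the corrective side is non-empty:
\[
\lambda^{\mathrm{corr}}(t)\ge\gamma\,e^{\zeta|I_{t^-}|}\,\ind_{\{\text{corrective side non-empty at }t^-\}}.
\]
Take the case $I_{t^-}\ge0$ with the ask side non-empty.
\begin{itemize}
\item Drop the uninformed part of the mixture \eqref{eq:gains_mixture}, since it is nonnegative, to get $G^a\ge\psi\exp(-\eta\delta^a_t+\zeta I_t)$.
\item By (S6), $\delta^a_t\le\dmax$, so $G^a\ge\psi e^{-\eta\dmax}e^{\zeta|I_{t^-}|}$. Here I use that $I$ and the gain are evaluated at the pre-jump state; the simulator recomputes $G$ from the post-event book, which is the same as the left limit between events.
\item The raw Hawkes intensity satisfies $\lambda_i^H(t)\ge\mu_i\ge\mu_{\min}$, because the kernels are nonnegative.
\item Summing over the $K\ge1$ buy types (or just keeping one) gives $\lambda^{\mathrm{corr}}\ge\psi e^{-\eta\dmax}\mu_{\min}e^{\zeta|I_{t^-}|}=\gamma e^{\zeta|I_{t^-}|}$.
\end{itemize}
The case $I_{t^-}<0$ with the bid side non-empty is symmetric, using $g^{I,b}$ and $-\zeta I=\zeta|I|$. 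Assumption (S1), $\psi>0$, guarantees $\gamma>0$. Integrating this bound and combining it with the compensator identity yields \eqref{eq:exp_int} after dividing by $\gamma$.

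For \eqref{eq:exp_int_tecs}, I split $[0,T]$ into the times when the corrective side is non-empty and the times when it is empty. The first piece is bounded by \eqref{eq:exp_int}. On the second piece, bound $e^{\zeta|I_{t^-}|}\le e^{\zeta\sup_{s\le T}|I_{s^-}|}$ pointwise and pull this factor out of the time integral, which leaves $T_{\mathrm{ecs}}$ by \eqref{eq:tecs}. Taking expectations gives the stated inequality, with the second term possibly infinite, in which case the bound is trivial.

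The main obstacle I expect is the rigorous justification of the compensator identity. The intensity is state dependent, modulated by the LOB and by $S_t$, and $S_t$ is exogenous Brownian noise. I would therefore take $\mathcal{F}_t$ to be the filtration generated by the taker events, the book, and $S$, and check that $\lambda_i$ is still the intensity with respect to it. This follows from the time-change construction \eqref{eq:time_change_equation}: given the state, the threshold $S^{\mathrm{rem}}$ is an independent $\mathrm{Exp}(1)$ variable, so $N_i-\int\lambda_i$ is a local martingale. The Brownian part is not hit by jumps, and the construction does hold $G$ piecewise fixed; if needed, I would apply the argument to the intensity actually used in the simulator, with $I$ frozen on segments, which is the quantity appearing as $I_{t^-}$. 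Once the local martingale property is granted, boundedness from Proposition~\ref{prop:nonexplosion} removes all integrability issues.
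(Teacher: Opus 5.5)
Your proposal is correct and follows the paper's proof: the compensator identity gives $\E\bigl[\int_0^T\Lambda^{\mathrm{corr}}\,\dd t\bigr]=\E[N^{\mathrm{corr}}([0,T])]\le\Nmax$, Lemma~\ref{lem:corr_lower} supplies the pointwise bound $\gamma e^{\zeta\abs{I_{t^-}}}$ whenever the corrective side is non-empty, and the full-interval bound comes from splitting $[0,T]$ and using the supremum on corrective-empty periods. The differences are minor. You obtain $N^{\mathrm{corr}}\le\Nmax$ directly from Proposition~\ref{prop:nonexplosion} instead of rerunning the depth argument, and you spell out the localization and monotone-convergence step behind the compensator identity, which the paper invokes without comment.
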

Economically, the result captures the model's price-discovery force: when the microprice deviates from fundamentals and corrective liquidity is available, informed market orders discipline the deviation, making adverse selection for market makers part of the same mechanism that stabilizes prices.

\begin{corollary}[Occupation-time bound]\label{cor:occupation}
For every $M>0$,
\begin{equation}
  \E\!\left[\int_0^T
  \ind_{\{\abs{I_{t^-}}>M\}}
  \ind_{\{\text{corrective side non-empty at }t^-\}}\,\dd t\right]
  \le \frac{\Nmax}{\gamma}e^{-\zeta M}.
  \label{eq:occupation}
\end{equation}
Thus, outside corrective-empty periods, the expected occupation time of mispricing above level $M$ decays exponentially in $M$. On the full interval,
\begin{equation}
  \E\!\left[\int_0^T
  \ind_{\{\abs{I_{t^-}}>M\}}\,\dd t\right]
  \le
  \frac{\Nmax}{\gamma}e^{-\zeta M}+\E[T_{\mathrm{ecs}}],
  \label{eq:occupation_full}
\end{equation}
where heuristically under competition and optimality $\E[T_{\mathrm{ecs}}] \approx 0$.
\end{corollary}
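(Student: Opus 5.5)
The corollary follows from Proposition~\ref{prop:exp_int} by a pathwise Markov (Chernoff) inequality inside the time integral. Fix $M>0$. Since $\zeta>0$ by (S1), the map $x\mapsto e^{\zeta x}$ is increasing and positive. Hence, pointwise in $(t,\omega)$,
\[
  \ind_{\{\abs{I_{t^-}}>M\}} \le e^{-\zeta M}\,e^{\zeta\abs{I_{t^-}}}.
\]
On $\{\abs{I_{t^-}}>M\}$ the right-hand side exceeds $1$, and elsewhere it is nonnegative.

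\textbf{First bound.} Multiply the pointwise inequality by the indicator that the corrective side is non-empty at $t^-$. Integrate over $[0,T]$ and take expectations; Tonelli applies because every integrand is nonnegative. This gives
\[
  \E\!\left[\int_0^T \ind_{\{\abs{I_{t^-}}>M\}}\ind_{\{\text{corrective side non-empty at }t^-\}}\,\dd t\right]
  \le e^{-\zeta M}\,\E\!\left[\int_0^T e^{\zeta\abs{I_{t^-}}}\ind_{\{\text{corrective side non-empty at }t^-\}}\,\dd t\right].
\]
Inserting the bound \eqref{eq:exp_int} on the right-hand side yields \eqref{eq:occupation}.

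\textbf{Full-interval bound.} Split the indicator $\ind_{\{\abs{I_{t^-}}>M\}}$ according to whether the corrective side is non-empty or empty at $t^-$. The non-empty part is exactly the quantity just bounded. For the empty part, bound $\ind_{\{\abs{I_{t^-}}>M\}}$ by $1$. Its integral is then at most
\[
  \int_0^T \ind_{\{\text{corrective side empty at }t^-\}}\,\dd t = T_{\mathrm{ecs}},
\]
by the definition \eqref{eq:tecs}. Taking expectations gives \eqref{eq:occupation_full}.

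\textbf{Technical point.} The integrals must be well defined, i.e.\ the integrands must be jointly measurable in $(t,\omega)$. The microprice is piecewise constant between events and decision times, and Proposition~\ref{prop:nonexplosion} guarantees finitely many events almost surely. The fundamental $S$ is continuous. Hence $I_{t^-}$ is a left-continuous adapted process, and therefore predictable. The empty-side indicators are predictable for the same reason.

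The estimate on the empty part is deliberately crude: it does not use the exponential term in \eqref{eq:exp_int_tecs}, and bounding the indicator by one avoids needing any control of $\sup_{t\le T}\abs{I_{t^-}}$. The remark $\E[T_{\mathrm{ecs}}]\approx 0$ is heuristic and is not part of what needs to be proved.
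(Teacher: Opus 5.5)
Your proposal is correct and follows the paper's proof essentially line for line. You use the pointwise bound $\ind_{\{\abs{I_{t^-}}>M\}}\le e^{-\zeta M}e^{\zeta\abs{I_{t^-}}}$ together with \eqref{eq:exp_int} to get \eqref{eq:occupation}, then split on whether the corrective side is empty and bound the empty part by $T_{\mathrm{ecs}}$ to get \eqref{eq:occupation_full}; your measurability remark is a small addition that the paper leaves implicit.
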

This corollary means that large mispricing are expected to be short, with durations that decrease exponentially in the threshold except during temporary liquidity dry-ups when the corrective side of the book is absent then any bound requires some prior knowledge about the posting strategy of agents. Nevertheless, in any sensible market the time spent with one side of the book empty is negligible leading to an approximate bound on the total time mispricing spends being large.

The next statement controls the entire path, including corrective-empty periods.

\begin{proposition}[Pathwise mispricing tail bound]\label{prop:pathwise}
Under Assumptions \textup{(S1)--(S7)}, let
\begin{equation}
  \Delta_P := 2\dmax(\Nmax+\Nsteps),
  \qquad
  K_0 := \abs{I_0}+\Delta_P
  = \abs{S_0-\Pmicro_0}+2\dmax(\Nmax+\Nsteps).
  \label{eq:K0}
\end{equation}
Then, for every $K>K_0$,
\begin{equation}
  \Prob\!\left(\sup_{t\in[0,T]}\abs{I_t}\le K\right)
  \ge
  1-2\exp\!\left(-\frac{(K-K_0)^2}{2\sigma^2T}\right).
  \label{eq:pathwise_tail}
\end{equation}
In particular, $\Prob(\sup_{t\le T}\abs{I_t}<\infty)=1$.
\end{proposition}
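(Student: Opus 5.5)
Our approach is to decompose the mispricing as $I_t = S_t - \Pmicro_t = (S_t - S_0) + (S_0 - \Pmicro_0) - (\Pmicro_t - \Pmicro_0)$. The triangle inequality then gives, pathwise,
\[
\sup_{t\le T}\abs{I_t}\le \abs{I_0} + \sup_{t\le T}\abs{\Pmicro_t-\Pmicro_0} + \sigma\sup_{t\le T}\abs{W_t}.
\]
The plan is to show that the middle term is almost surely bounded by the deterministic constant $\Delta_P=2\dmax(\Nmax+\Nsteps)$. Once this is done, the statement reduces to a Gaussian tail bound for the running maximum of $\abs{W}$.

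\textbf{Bounding the microprice displacement.} The microprice is piecewise constant in $t$. It changes only at two kinds of times. The first kind are market-taker events, and by Proposition~\ref{prop:nonexplosion} there are at most $\Nmax$ of them a.s. The second kind are market-maker decision times (quote updates and cancellations), of which there are $\Nsteps$ by (S5). We will argue that each such change moves $\Pmicro$ by at most $2\dmax$. Before and after any update with both sides non-empty, $\Pmicro$ lies in $[B,A]$, and (S6) gives $A-\Pmicro\le\dmax$ and $\Pmicro-B\le\dmax$, so the spread is at most $2\dmax$. A taker event consumes part of one side only. A buy removes ask depth, so the new ask satisfies $A'\ge A$ while the bid is unchanged. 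The new microprice lies in $[B,A']$ with $A'-B\le 2\dmax$, and the old microprice also lay in $[B,A]\subset[B,A']$, so the jump is at most $2\dmax$. The sell case is symmetric.

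Summing over all change times gives $\abs{\Pmicro_t-\Pmicro_0}\le 2\dmax(\Nmax+\Nsteps)$. The decision-time jumps are the delicate case, since the whole book may be repositioned at such a time. Reading (S6) literally does not control how far the book moves. What we will use is the simulator's quote-range restriction, namely that admissible postings lie within $\dmax$ of the prevailing reference (micro)price, which is what (S6) is meant to encode. We will also need to handle the empty-side fallback to $\Plast$. A last trade price is the price of a resting quote, so it lies within $\dmax$ of the microprice that held just before that trade, and the same jump bound should survive. Making this bookkeeping airtight, including the fallback case, is the main obstacle. We would state explicitly the interpretation of (S6) at decision times that is needed.

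\textbf{Brownian step.} With $\sup_t\abs{\Pmicro_t-\Pmicro_0}\le\Delta_P$ a.s., we get $\sup_t\abs{I_t}\le K_0+\sigma\sup_{t\le T}\abs{W_t}$. The failure probability is therefore at most $\Prob(\sup_{t\le T}\abs{W_t}>(K-K_0)/\sigma)$. Taking a union bound over $\sup W$ and $\sup(-W)$, and applying the reflection principle $\Prob(\sup_{t\le T}W_t>x)=2\Prob(W_T>x)\le e^{-x^2/(2T)}$ (Gaussian Chernoff bound), yields the bound $2\exp(-(K-K_0)^2/(2\sigma^2T))$. The zero floor on $S$ only decreases $\abs{S_t-S_0}$, so it does not affect the bound. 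Letting $K\to\infty$ gives $\sup_t\abs{I_t}<\infty$ a.s.
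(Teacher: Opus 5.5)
Your proposal is correct to the same standard as the paper's argument and follows it essentially step for step: the same triangle-inequality split of $I_t$, the same deterministic bound $\sup_{t\le T}\abs{\Pmicro_t-\Pmicro_0}\le\Delta_P$ from at most $\Nmax+\Nsteps$ change times each moving $\Pmicro$ by at most $2\dmax$ (Lemma~\ref{lem:Pmicro_bound}), and the same reflection-principle Gaussian tail. The caveat you flag is real and is shared by the paper, whose lemma infers the $2\dmax$ jump at decision times and in the empty-side fallback from $A_t-B_t\le2\dmax$ alone; your nesting argument is what actually makes the trade case rigorous, but your justification that a last trade price lies within $\dmax$ of the pre-trade microprice fails for orders sweeping several levels (so the fallback also needs the quote-window interpretation), and the constant $2$ in the tail needs $\Prob(Z>z)\le\tfrac12e^{-z^2/2}$ for standard normal $Z$ and $z\ge0$, not the plain Chernoff bound you cite, which only gives $4$.
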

This theorem means that the endogenous price process cannot generate unbounded dislocations over the simulation horizon: the worst deviation is controlled by finite LOB movements and by the volatility of the exogenous fundamental value. It allows to access the tail behaviour of the mispricing.

\begin{remark}[Moments of the supremum]\label{rem:moments}
As a standard result, the Gaussian tail bound~\eqref{eq:pathwise_tail} implies finiteness of all polynomial moments of $\sup_{t\le T}\abs{I_t}$. In particular, for every $p\ge1$,
\begin{align*}
  \E\!\left[\sup_{t\le T}\abs{I_t}^p\right]
  &\le
  K_0^p
  +p\int_{K_0}^{\infty}K^{p-1}
    2\exp\!\left(-\frac{(K-K_0)^2}{2\sigma^2T}\right)\dd K\\
  &<\infty.
\end{align*}
Moreover, for every $\lambda<1/(2\sigma^2T)$,
\begin{equation*}
  \E\!\left[
  \exp\!\left(\lambda\bigl(\sup_{t\le T}\abs{I_t}-K_0\bigr)^2\right)
  \right]<\infty .
\end{equation*}
\end{remark}

These bounds formalize the stabilizing mechanism of the extended Hawkes model. Large mispricing increases the intensity of corrective flow exponentially, but corrective flow consumes finite depth. Consequently, periods with large active corrective pressure are short in compensator time, and even periods without an active corrective side remain pathwise controlled on finite horizons by the bounded number of LOB jumps together with the Brownian fluctuations of the fundamental value.

\section{Learning Framework}

We model market makers as learning agents trained with multi-agent proximal policy optimization (MAPPO). All experiments were implemented in Python using the RLlib reinforcement-learning library (\cite{liang2018rllib}), while the simulator follows the PettingZoo \cite{terry2021pettingzoo} \texttt{ParallelEnv} interface. In particular, the environment is implemented as a parallel multi-agent market and interfaced with RLlib through the PettingZoo adapter, which yields a standard multi-agent training loop.

The learning setup follows the centralized training with decentralized execution (CTDE) paradigm. During execution, each agent acts only on its local observation. During training, a centralized critic additionally receives a privileged global state and an agent identifier.

\subsection{Observation space}

At each decision step, agent $i$ observes a local state composed of three blocks:  
(i) scalar features,  
(ii) a public summary of the limit order book, and  
(iii) a history window.

The scalar block contains the agent's own inventory and cash, time elapsed measure, market price, last transaction price and any private signal available to informed agents. The public LOB block summarizes liquidity around the mid-price while preserving queue-priority information: instead of collapsing each price level to a single depth number, the observation distinguishes between the agent's own resting liquidity and competing liquidity at tracked price levels. Finally, the history block provides a compact recent trajectory of market and agent-specific variables.

This decomposition was chosen to combine portfolio state, current market conditions, and short-term temporal context, while keeping the actor observation decentralized. The full construction of all observation components is provided in Appendix \ref{app:observation}.

\subsection{Action space}

At each step, the market maker chooses quote placement and quote size independently on the bid and ask sides. For each side, the policy outputs a discrete level from mid to determine the posted price and a discrete size bin.

The size bin determines whether a new order is submitted and, if so, maps deterministically to an explicit order volume. This discretization makes actions easier to interpret economically and reduces the number of behaviorally redundant choices. Order cancellations are not learned directly: they follow deterministic rules designed to stabilize training and to mimic realistic order-management behavior. The exact cancellation mechanism is described in Appendix \ref{app:cancel}.

\subsection{Policy network}

The actor network combines three complementary encoders. First, scalar portfolio and market features are processed by a multilayer perceptron (MLP). Second, the agent's outstanding orders are fused with surrounding market liquidity through a cross-attention block, allowing the policy to condition on queue position and local competitive pressure. Third, a gated recurrent unit (GRU) encodes the history window to provide short-term temporal context. The resulting embeddings are concatenated and passed to the policy head.

To capture regime-dependent behavior, the actor uses a mixture-of-experts (MoE) head. Expert selection is based on the sign of the agent's inventory and, for agents with access to a private value signal, the sign of estimated mispricing $M_t - \widetilde{S}_t$. This design encourages specialization across economically meaningful regimes while keeping execution fully decentralized.

The precise architectural specification is deferred to Appendix \ref{app:agents_net}.

\subsection{Centralized critic}

The centralized critic provides the CTDE value baseline used in MAPPO. It receives a privileged global state containing market-wide variables, per-agent scalar states, and market-wide queue information, together with the identity of the agent whose value is being evaluated. The actor never accesses this privileged information.

Architecturally, the critic mirrors the main design principles of the actor. It combines an MLP-based scalar encoder, which uses a history embedding produced by the same GRU architecture as the actor, with a cross-attention module operating on the market-wide queue state, and maps the resulting representation to a value estimate. As in the actor, the value head uses regime-dependent experts. Here the gating depends on the sign of the current agent's inventory and the sign of the actual mispricing $M_t - S_t$, allowing the critic to distinguish between economically different market states during training.

A full mathematical description of the critic is provided in Appendix \ref{app:critics_net}.

\subsection{Rewards}
\label{sec:rewards}

A central part of the environment specification is the objective of the market makers' training. In the literature, the objective is typically formulated as a terminal wealth criterion penalized by inventory risk; see for instance \cite{AVELLANEDASTOIKOV2008,CARTEAJAIMUNGAL2016,LEHALLENEUMAN2019,NEUMANVOSS2022,BANKCARTEAKORBER2023,BMB2025}.

In the environment, we decompose that reward into terminal and running components. This gives a denser learning signal to the RL algorithm while preserving the same economic trade-off between profitability and inventory control.

At the same time, we further enhance the reward by introducing liquidation prices computed from executable market quotes. This is a natural extension in a fully observable LOB. For brevity, denote $
c_n^{(i)} := c_{t_n^{\mathrm{dec}}}^{(i)}$ and
$q_n^{(i)} := q_{t_n^{\mathrm{dec}}}^{(i)}.
$

For each market maker $i$, let $B_n^{-i}$ and $A_n^{-i}$ denote the best bid and best ask after excluding agent $i$'s own resting orders. We then define the liquidation price by
\begin{equation}
P_{n,\mathrm{liq}}^{(i)}
=
\begin{cases}
B_n^{-i}, & q_n^{(i)} > 0,\\[4pt]
A_n^{-i}, & q_n^{(i)} < 0,\\[4pt]
0, & q_n^{(i)} = 0.
\end{cases}
\label{eq:liq_price}
\end{equation}
Accordingly, the liquidation wealth of agent $i$ at decision time $t_n^{\mathrm{dec}}$ is
\begin{equation}
W_{n,\mathrm{liq}}^{(i)}
=
c_n^{(i)} + q_n^{(i)} P_{n,\mathrm{liq}}^{(i)}.
\label{eq:liq_wealth}
\end{equation}

The one-step reward used by the simulator is the increment in liquidation wealth penalized by inventory holding. Writing
\[
\Delta W_n^{(i)} := W_{n,\mathrm{liq}}^{(i)} - W_{n-1,\mathrm{liq}}^{(i)},
\]
the implemented (unscaled) reward is
\begin{equation}
\tilde r_n^{(i)}
=
\begin{cases}
\Delta W_n^{(i)} - \phi_i \big(q_n^{(i)}\big)^2 \Delta t, & n = 1,\dots,N_{\text{steps}}-1,\\[6pt]
\Delta W_{N_{\text{steps}}}^{(i)} - \alpha_i \big(q_{N_{\text{steps}}}^{(i)}\big)^2, & n = N_{\text{steps}},
\end{cases}
\label{eq:implemented_reward}
\end{equation}
where $N_{\text{steps}}$ is the number of agents' decision steps and $(\alpha_i,\phi_i)$ are agent-specific inventory-aversion parameters. Hence, for all non-terminal decision steps the agent receives a running reward equal to the change in liquidation wealth minus the running inventory penalty, while at the terminal step the running penalty is replaced by a terminal quadratic penalty on the residual inventory.

Finally, to keep the numerical scale of rewards moderate during training, the reward passed to the learning algorithm is normalized by a constant factor. Denoting the unscaled reward in \eqref{eq:implemented_reward} by $\tilde r_n^{(i)}$, the simulator uses the scaled reward
\[
r_n^{(i)}=\frac{1}{N_{\text{steps}}}\,\tilde r_n^{(i)}.
\]
This scaling is introduced purely for numerical stability. This rewards then enters the computations of total reward in PPO algorithm.

\section{Results}

In this section, we analyze how market informedness affects the profitability of market makers. We begin with a stationary setting and then examine the robustness of the results by extending the duration of the experiment and introducing a non-stationary environment.

\subsection{Baseline experiments}

The baseline experiments show that adverse selection is particularly costly when market prices deviate substantially from fundamental value. Within the stationary informedness regime, the overall relationship is upward: market makers' profitability tends to improve as the level of market informedness increases.

Figure \ref{fig:low_vol_low_inf} presents the evaluation results for a single episode consisting of $100$ simulation steps over the time horizon $T=10$, under conditions of low market volatility and low market informedness. Subfigure \ref{subfig:a_low_vol_low_inf} illustrates the evolution of the fundamental value together with the market mid-price and transaction prices. During approximately the first half of the experiment, the market overprices the asset relative to its fundamental value. Around time $t=4$, the market experiences a shock that drives the price back toward fundamentals, resulting in moderate losses distributed across all agents. Following this reversal, the sign of the mispricing changes, and, through the self-exciting dynamics of the Hawkes process, the market eventually begins to underprice the asset.

Subfigure \ref{subfig:b_low_vol_low_inf} suggests that this price movement is driven primarily by the aggressive policy. Aggressive agents accumulate the largest inventory positions during the trend reversal and subsequently unwind them at less favorable prices during the period of market stagnation between $t=6$ and $t=8$. As a result, they incur substantial losses, which are shown in Subfigure \ref{subfig:c_low_vol_low_inf}.

\begin{figure}[htbp]
    \centering
    \begin{subfigure}[t]{0.3\textwidth}
        \centering
        \includegraphics[width=\textwidth]{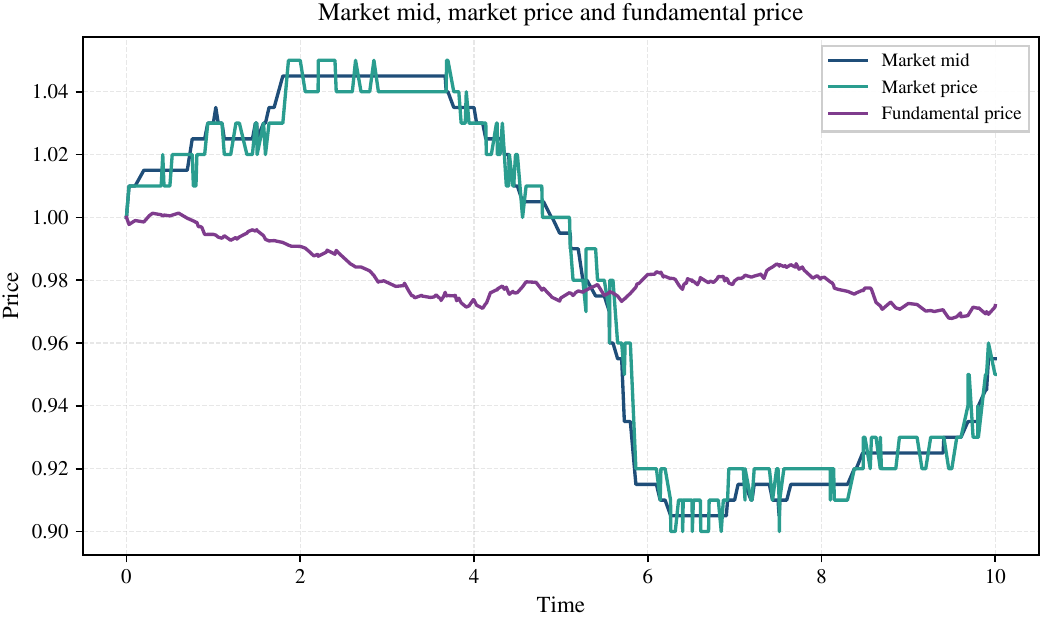}
        \caption{Price evolution.}
        \label{subfig:a_low_vol_low_inf}
    \end{subfigure}
    \hfill
    \begin{subfigure}[t]{0.3\textwidth}
        \centering
        \includegraphics[width=\textwidth]{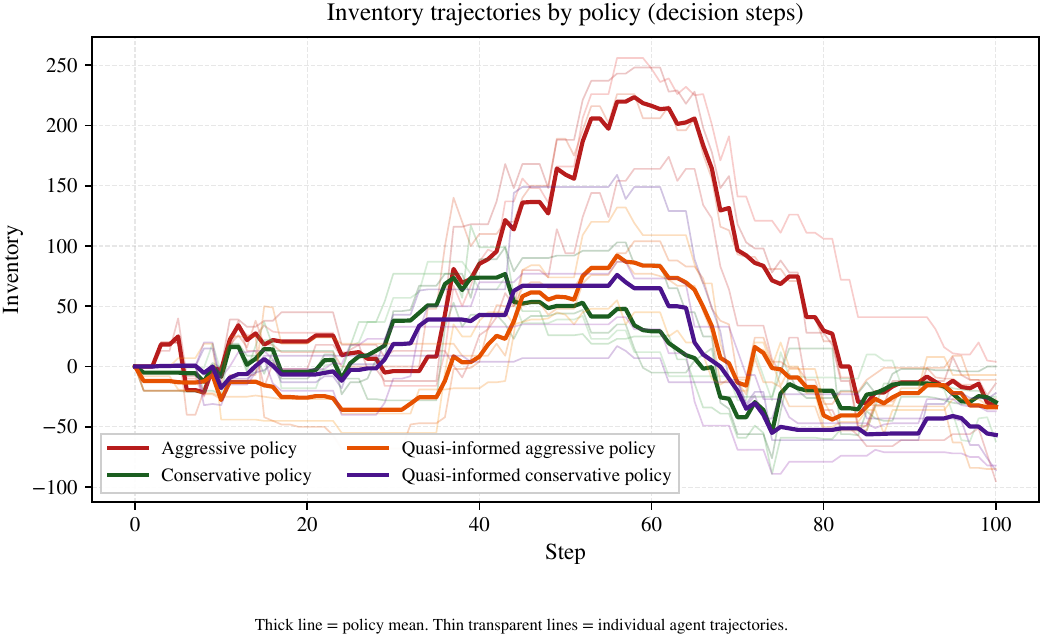}
        \caption{Inventory evolution.}
        \label{subfig:b_low_vol_low_inf}
    \end{subfigure}
    \begin{subfigure}[t]{0.3\textwidth}
        \centering
        \includegraphics[width=\textwidth]{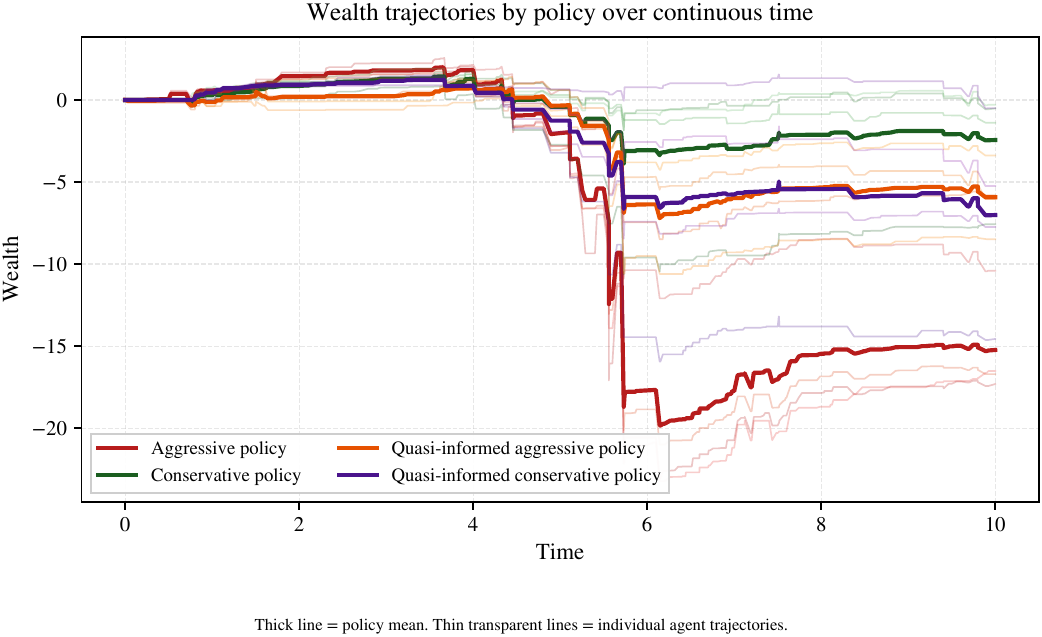}
        \caption{Wealth evolution.}
        \label{subfig:c_low_vol_low_inf}
    \end{subfigure}
    
    \caption{Simulation under low volatility ($\sigma = 0.005$) and low informedness ($\psi = 0.01$). Seed = $42$.}
    
    \label{fig:low_vol_low_inf}
\end{figure}

Figure \ref{fig:low_vol_high_inf} illustrates the results of a single episode conducted under the same setup, with the only difference being a higher level of market informedness, set to $\psi = 0.95$. Under this specification, market price dynamics differ markedly from those observed in the low-informedness case. As shown in Subfigure \ref{subfig:a_low_vol_high_inf}, mispricing remains relatively limited, typically amounting to only $1$--$2$ ticks and often falling below the tick size. Subfigure \ref{subfig:b_low_vol_high_inf} shows that agents maintain relatively low inventory levels, reflecting the greater predictability of market movements. This more effective inventory management translates into a controlled and steady increase in terminal wealth across all policies.

\begin{figure}[htbp]
    \centering
    \begin{subfigure}[t]{0.3\textwidth}
        \centering
        \includegraphics[width=\textwidth]{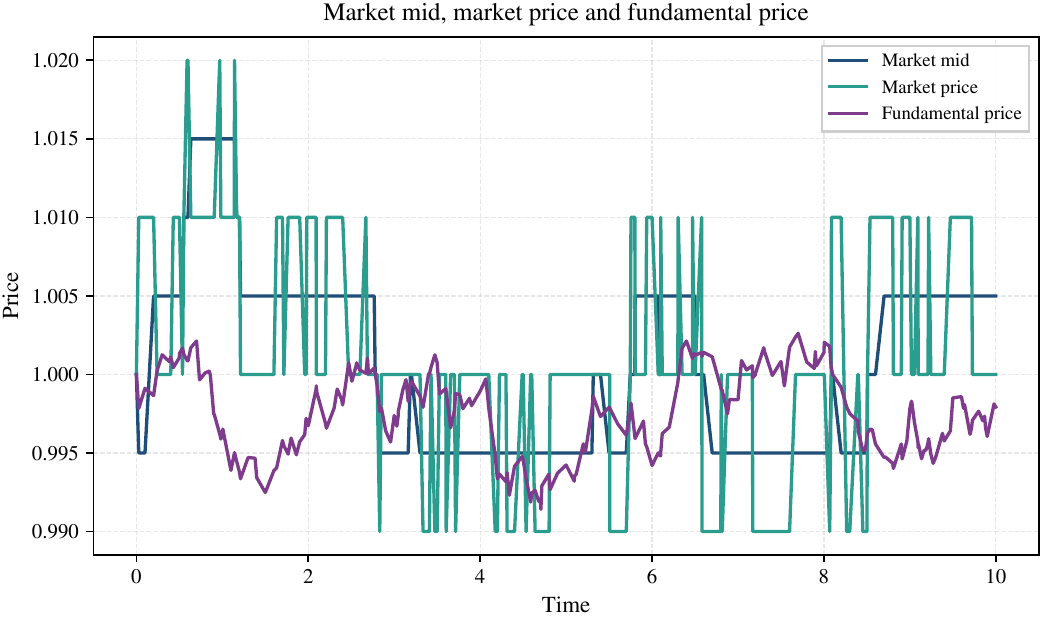}
        \caption{Price evolution.}
        \label{subfig:a_low_vol_high_inf}
    \end{subfigure}
    \hfill
    \begin{subfigure}[t]{0.3\textwidth}
        \centering
        \includegraphics[width=\textwidth]{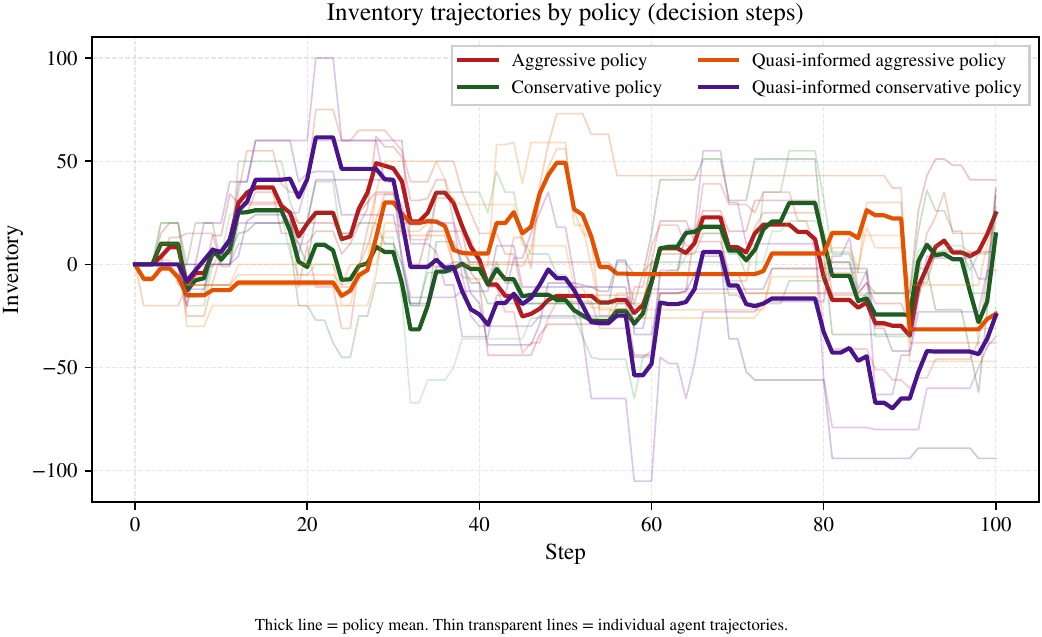}
        \caption{Inventory evolution.}
        \label{subfig:b_low_vol_high_inf}
    \end{subfigure}
    \begin{subfigure}[t]{0.3\textwidth}
        \centering
        \includegraphics[width=\textwidth]{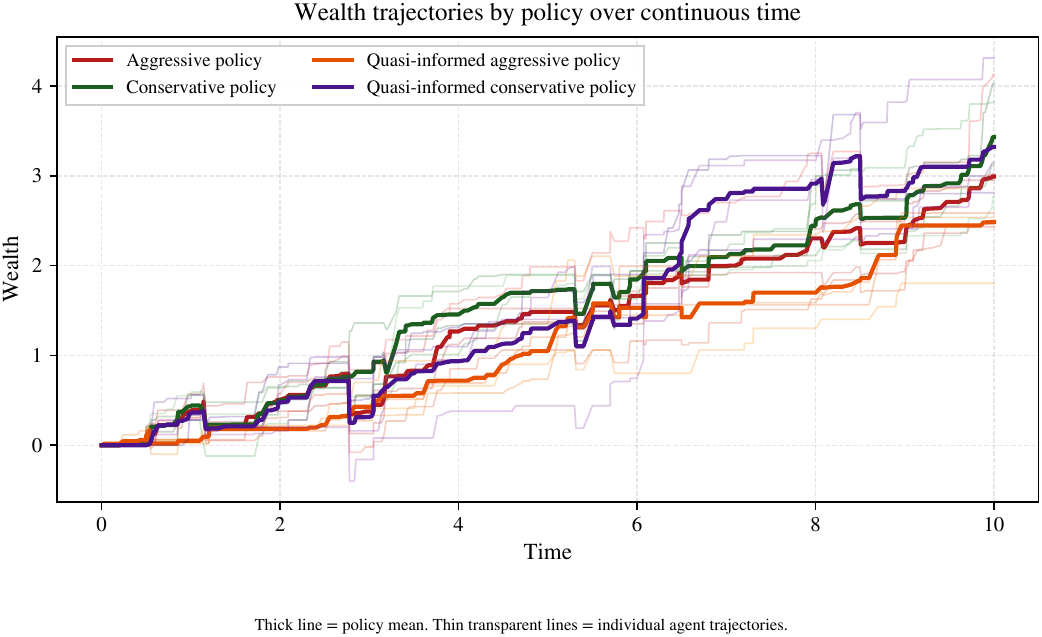}
        \caption{Wealth evolution.}
        \label{subfig:c_low_vol_high_inf}
    \end{subfigure}
    
    \caption{Simulation under low volatility ($\sigma = 0.005$) and high informedness ($\psi = 0.95$). Seed = $42$.}
    
    \label{fig:low_vol_high_inf}
\end{figure}

These two experiments provide qualitative evidence that low market informedness is generally unfavorable for market makers, as it limits their ability to predict future price movements on the basis of past market states. In such an environment, the relatively small informed segment of the market generates adverse selection, which may trigger substantial reversals in market regimes and, consequently, losses for market makers who are unable to unwind their positions sufficiently quickly.

This interpretation is supported by Table \ref{tab:wealth_combined_vol}, which reports the average terminal wealth for each policy and each informedness ratio $\psi$. For the terminal wealth associated with each policy, as well as for the aggregate terminal wealth corresponding to each value of $\psi$, an ordinary least squares (OLS) regression has been estimated in order to identify the general trend. In all cases, the estimated slope coefficient is positive, which is consistent with the conclusions drawn from the single-episode analysis. Repeating the same experiments under higher market volatility, $\sigma = 0.02$, leads to the same qualitative conclusion: market makers' profitability exhibits an overall upward trend as a function of market informedness.

\begin{table}[!htbp]
\centering
\scriptsize
\setlength{\tabcolsep}{1.8pt}
\renewcommand{\arraystretch}{1.15}
\resizebox{\textwidth}{!}{%
\begin{tabular}{lccccccccccccc}
\toprule
Policy & \rotatebox{90}{$\psi=0,01$} & \rotatebox{90}{$\psi=0,05$} & \rotatebox{90}{$\psi=0,10$} & \rotatebox{90}{$\psi=0,30$} & \rotatebox{90}{$\psi=0,50$} & \rotatebox{90}{$\psi=0,70$} & \rotatebox{90}{$\psi=0,90$} & \rotatebox{90}{$\psi=0,95$} & \rotatebox{90}{$\psi=0,99$} & \rotatebox{90}{$\psi=1,00$} & \rotatebox{90}{$\hat\alpha$} & \rotatebox{90}{$\hat\beta$} & \rotatebox{90}{$R^2$} \\
\midrule

\multicolumn{14}{l}{\textit{Low volatility} ($\sigma = 0.005$)} \\
\midrule
Aggressive & \makecell{-0,73\\$\pm$ 5,95} & \makecell{1,68\\$\pm$ 2,57} & \makecell{1,56\\$\pm$ 2,10} & \makecell{2,88\\$\pm$ 1,54} & \makecell{4,36\\$\pm$ 1,60} & \makecell{2,90\\$\pm$ 1,11} & \makecell{4,62\\$\pm$ 1,12} & \makecell{3,61\\$\pm$ 0,98} & \makecell{3,38\\$\pm$ 0,97} & \makecell{3,93\\$\pm$ 1,23} & 1,12 & 3,09 & 0,62 \\
Conservative & \makecell{0,96\\$\pm$ 3,40} & \makecell{1,25\\$\pm$ 2,58} & \makecell{2,44\\$\pm$ 1,27} & \makecell{3,05\\$\pm$ 1,32} & \makecell{1,96\\$\pm$ 0,83} & \makecell{2,56\\$\pm$ 1,12} & \makecell{3,06\\$\pm$ 0,77} & \makecell{3,34\\$\pm$ 0,84} & \makecell{3,40\\$\pm$ 0,90} & \makecell{3,09\\$\pm$ 0,97} & 1,57 & 1,71 & 0,65 \\
\makecell[l]{Quasi-informed\\aggressive} & \makecell{-0,50\\$\pm$ 3,24} & \makecell{0,85\\$\pm$ 3,40} & \makecell{3,04\\$\pm$ 1,62} & \makecell{2,58\\$\pm$ 1,01} & \makecell{3,43\\$\pm$ 1,27} & \makecell{3,53\\$\pm$ 0,89} & \makecell{4,70\\$\pm$ 1,39} & \makecell{3,58\\$\pm$ 0,97} & \makecell{4,11\\$\pm$ 1,00} & \makecell{3,20\\$\pm$ 1,04} & 1,21 & 2,99 & 0,62 \\
\makecell[l]{Quasi-informed\\conservative} & \makecell{-0,91\\$\pm$ 3,16} & \makecell{0,71\\$\pm$ 1,79} & \makecell{2,88\\$\pm$ 1,78} & \makecell{2,55\\$\pm$ 1,38} & \makecell{2,91\\$\pm$ 1,06} & \makecell{3,40\\$\pm$ 1,34} & \makecell{2,54\\$\pm$ 0,69} & \makecell{2,80\\$\pm$ 0,84} & \makecell{3,14\\$\pm$ 0,90} & \makecell{2,24\\$\pm$ 0,75} & 1,15 & 1,96 & 0,37 \\
\textbf{Sum} & \textbf{-1,18} & \textbf{4,49} & \textbf{9,92} & \textbf{11,06} & \textbf{12,66} & \textbf{12,39} & \textbf{14,92} & \textbf{13,33} & \textbf{14,03} & \textbf{12,46} & \textbf{5,05} & \textbf{9,75} & \textbf{0,64} \\

\midrule
\multicolumn{14}{l}{\textit{High volatility} ($\sigma = 0.02$)} \\
\midrule
Aggressive & \makecell{-3,12\\$\pm$ 6,14} & \makecell{-1,47\\$\pm$ 2,76} & \makecell{0,60\\$\pm$ 3,11} & \makecell{-0,08\\$\pm$ 3,99} & \makecell{0,19\\$\pm$ 3,22} & \makecell{-0,07\\$\pm$ 3,23} & \makecell{0,51\\$\pm$ 3,27} & \makecell{2,36\\$\pm$ 2,93} & \makecell{0,44\\$\pm$ 4,43} & \makecell{2,73\\$\pm$ 5,01} & -1,50 & 3,11 & 0,58 \\
Conservative & \makecell{-2,13\\$\pm$ 6,20} & \makecell{-0,26\\$\pm$ 2,96} & \makecell{0,65\\$\pm$ 2,26} & \makecell{0,66\\$\pm$ 1,66} & \makecell{0,09\\$\pm$ 2,18} & \makecell{0,68\\$\pm$ 2,26} & \makecell{0,68\\$\pm$ 2,96} & \makecell{2,36\\$\pm$ 2,05} & \makecell{1,04\\$\pm$ 2,20} & \makecell{1,89\\$\pm$ 2,75} & -0,68 & 2,26 & 0,57 \\
\makecell[l]{Quasi-informed\\aggressive} & \makecell{-0,26\\$\pm$ 4,41} & \makecell{2,67\\$\pm$ 3,59} & \makecell{6,33\\$\pm$ 3,55} & \makecell{4,27\\$\pm$ 2,39} & \makecell{2,11\\$\pm$ 2,79} & \makecell{2,58\\$\pm$ 2,46} & \makecell{2,39\\$\pm$ 2,30} & \makecell{3,67\\$\pm$ 2,66} & \makecell{1,77\\$\pm$ 3,16} & \makecell{4,54\\$\pm$ 2,13} & 2,90 & 0,20 & 0,00 \\
\makecell[l]{Quasi-informed\\conservative} & \makecell{-2,44\\$\pm$ 4,61} & \makecell{0,11\\$\pm$ 4,68} & \makecell{-0,82\\$\pm$ 4,18} & \makecell{1,62\\$\pm$ 3,25} & \makecell{2,54\\$\pm$ 2,31} & \makecell{-0,48\\$\pm$ 2,42} & \makecell{0,41\\$\pm$ 2,34} & \makecell{3,08\\$\pm$ 1,98} & \makecell{-0,56\\$\pm$ 2,37} & \makecell{3,66\\$\pm$ 2,43} & -0,68 & 2,52 & 0,28 \\
\textbf{Sum} & \textbf{-7,95} & \textbf{1,05} & \textbf{6,76} & \textbf{6,47} & \textbf{4,93} & \textbf{2,71} & \textbf{3,99} & \textbf{11,47} & \textbf{2,69} & \textbf{12,82} & \textbf{0,05} & \textbf{8,09} & \textbf{0,33} \\
\bottomrule
\end{tabular}}
\caption{Final wealth distribution across informed ratios $\psi$ under low volatility ($\sigma = 0.005$) and high volatility ($\sigma = 0.02$).
Policy-level entries are mean $\pm$ standard deviation (STD), rounded to two decimals.
The \textbf{Sum} row reports the sum of policy means.
The last three columns report $(\hat\alpha,\hat\beta,R^2)$ from the OLS regression of mean final wealth on the informed ratio $\psi$.
Training for 50 iterations evaluated on 400 distinct episodes.}
\label{tab:wealth_combined_vol}
\end{table}

\subsection{Quoting}

Quoting behavior differs systematically across policies, but it does not display a clear monotonic relationship with market informedness. The evidence instead suggests that profitability is driven primarily by overall market conditions, even though agents learn to adapt their quoting decisions to inventory positions and mispricing regimes.

Another important aspect of the baseline analysis concerns quoting behavior and its dependence on policy type and market informedness.

Averaged across all informedness ratios $\psi$, posted spreads differ across policies in an economically intuitive way. In the low-volatility regime, $\sigma = 0.005$, the Aggressive policy posts the tightest quotes on average, with a mean spread of $2.41$ ticks, whereas the Conservative policy posts an average spread of $2.64$ ticks, approximately $9.5\%$ higher. The quasi-informed policies quote more widely still, with average spreads of $2.79$ and $2.80$ ticks for the aggressive and conservative variants, respectively. This cross-policy ordering is consistent with the intended distinction between more and less aggressive quoting styles.

At the same time, the relationship between the average posted spread and the informedness ratio is weak and irregular. Regressing average spread on $\psi$ yields very low explanatory power for all policies, with $R^2 < 0.05$ throughout. This suggests that the previously documented increase in terminal wealth with market informedness is not driven by a simple monotonic adjustment in average quoting width. Rather, the profitability effect appears to stem primarily from changes in the market environment itself, such as the reduction in persistent mispricing and adverse selection at higher values of $\psi$.

A similar conclusion follows from the analysis of posted volume across spread levels. Across policies, agents generally post larger volumes closer to the mid-price and smaller volumes further away. The decline is moderate: mean posted volume typically decreases by about one unit between the closest and farthest quoted levels, that is, from slightly above $3$ to slightly above $2$. Aggressive policies also tend to post somewhat larger volumes near the mid-price than their conservative counterparts.

However, as in the case of average spread, posted volume at a given level from mid generally exhibits no strong linear relation with $\psi$, and in most cases the regression fit remains weak, with $R^2 < 0.10$. Two exceptions are the Quasi-informed aggressive policy at level 1, for which $R^2 = 0.484$ with a positive slope, and the Conservative policy at level 0, for which $R^2 = 0.30$ with a negative slope. Aside from these isolated cases, no common economically meaningful pattern is visible across policies and levels. Taken together, these findings support the view that the increase in profitability associated with higher market informedness is driven primarily by improvements in the market environment rather than by systematic changes in average quoting width or posted size.

At first glance, these results might suggest that the policies do not learn meaningful behavior. However, an analysis conditional on inventory and mispricing regimes indicates otherwise. Averaged across all informedness ratios $\psi$, and for both low and high market volatility, all policies tend to post tighter quotes and larger volumes on the ask side when inventory is positive, and analogously on the bid side when inventory is negative. Quantitatively, the relevant half-spreads are typically narrower by a factor of approximately $1.6$--$3.2$, while the corresponding posted volumes are larger by about $35\%$--$120\%$, with an average increase of roughly $70\%$. This pattern is consistent with meaningful inventory-aware specialization.

Moreover, the uninformed policies remain nearly symmetric across underpricing and overpricing regimes once inventory is held fixed, which is expected given that they do not observe the fundamental value. By contrast, the quasi-informed policies do differentiate their behavior across mispricing regimes. This distinction is especially clear for the Quasi-informed aggressive policy and less regular for the Quasi-informed conservative one. In particular, both policies tend to post higher ask-side volumes in overpricing regimes and higher bid-side volumes in underpricing regimes, suggesting that informed agents attempt to unwind positions when prices are favorable. When this pattern is analyzed separately for each value of $\psi$, it remains persistent for the Quasi-informed aggressive policy, but becomes more ambiguous for its conservative counterpart. This suggests that the Quasi-informed conservative policy faces a stronger trade-off between exploiting informational advantages and controlling inventory risk.

No equally clear pattern emerges for bid and ask half-spreads. This is likely due to the role of stochastic market order arrivals and random queueing in determining the effective importance of quoted spreads. Because market orders arriving between decision steps may consume liquidity beyond the first level, posted volume appears to be more consequential than the precise choice of quoting level, making the effect of half-spread adjustments less distinct.

To further assess the robustness of these findings and verify that the learned policies are not merely overfitting the specific market structure considered so far, we extend the analysis in two directions. First, we evaluate the already trained policies in substantially longer episodes than those used during training. Second, we introduce non-stationarity into the market regime in order to examine whether the main conclusions remain valid in a more dynamic environment.

\subsection{Expanded experiment}

An important robustness question is whether policies trained on episodes of 100 decision steps retain their performance in substantially longer runs. The expanded experiment addresses this issue by evaluating the same policies in episodes of 600 decision steps.

The results of the extended experiment are presented in Figure \ref{fig:long_low_vol_low_inf}. In this specification, market informedness is set to $\psi = 0.10$. In such an environment, mispricings occur frequently and often reach a magnitude of several ticks. The resulting trend reversals are qualitatively similar to those observed in the shorter experiments conducted in less informed markets.

The analysis of inventory dynamics shows that inventories, particularly for the Aggressive policy, can become large in absolute value. However, over a longer horizon they continue to fluctuate around zero rather than drift persistently in one direction. The wealth trajectories display a generally positive trend over time, interrupted by occasional sharp declines, again most visibly for the Aggressive policy. Overall, the behavior of the policies appears stable throughout the full episode. In particular, there is no visible change in behavior after time $t=10$ (or step 100), which corresponds to the original time horizon used during training. This indicates that the learned policies generalize well to longer experiments. Apart from scale, no clear quantitative change is visible in the effect of market informedness on market makers' profitability in this longer-horizon setting.

\begin{figure}[H]
    \centering
    \begin{subfigure}[t]{0.3\textwidth}
        \centering
        \includegraphics[width=\textwidth]{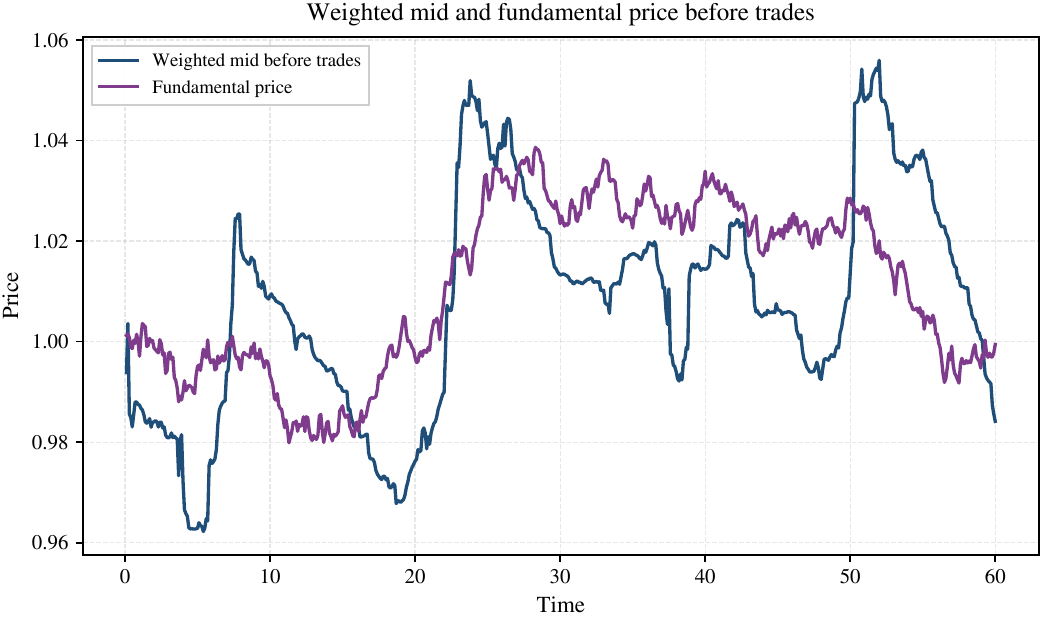}
        \caption{Weighted market mid and fundamental value.}
        \label{subfig:long_price}
    \end{subfigure}
    \hfill
    \begin{subfigure}[t]{0.3\textwidth}
        \centering
        \includegraphics[width=\textwidth]{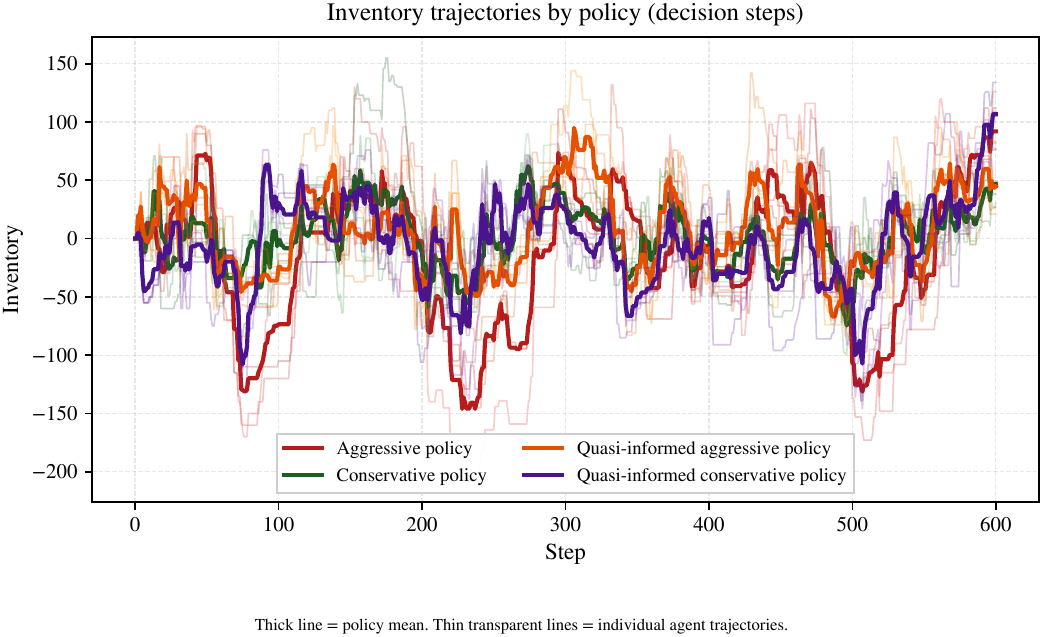}
        \caption{Inventory evolution.}
        \label{subfig:inv_long}
    \end{subfigure}
    \begin{subfigure}[t]{0.3\textwidth}
        \centering
        \includegraphics[width=\textwidth]{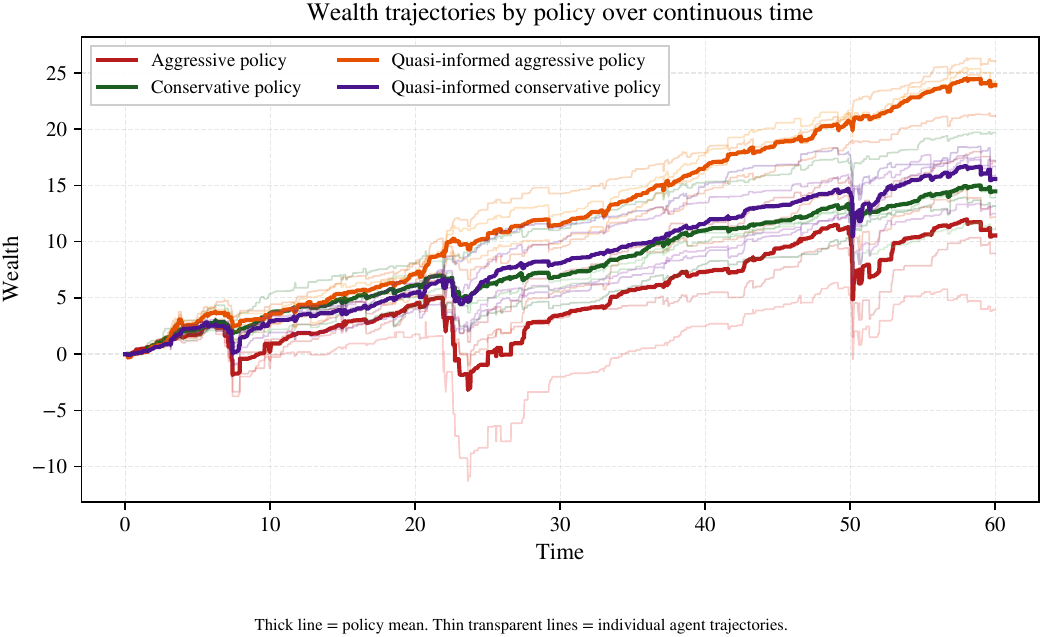}
        \caption{Wealth evolution.}
        \label{subfig:long_wealth}
    \end{subfigure}
    
    \caption{Simulation under low volatility ($\sigma = 0.005$) and medium-low informedness ($\psi = 0.10$). Seed = $37$.}
    
    \label{fig:long_low_vol_low_inf}
\end{figure}

Quantitatively, the results remain similar to those of the baseline experiment.

\subsection{Non-stationary experiments}
The non-stationary experiments indicate that the positive relationship between market informedness and agents' profitability is not merely an artifact of fitting a stationary environment. Instead, it remains visible across a broad range of non-stationary market regimes.

While the expanded-horizon experiment shows that the learned policies generalize over time, an additional question is whether they also remain effective when the underlying market environment itself changes. To address this issue, we consider a range of non-stationary regime specifications for the informed ratio $\psi$ and market volatility $\sigma$, covering both cases in which agents are aware of the regime and cases in which they are not. The results are reported in Table \ref{tab:mean_wealth_change_by_psi}, which presents the mean per-step wealth change for each policy and informedness level, together with OLS estimates summarizing the relationship between profitability and $\psi$ under each regime specification.

Overall, the evidence is consistent with the main conclusion of the stationary analysis. For every regime specification, the aggregate measure of profitability, reported in the \textbf{Sum} row, exhibits a positive slope coefficient with respect to market informedness. This indicates that the previously documented upward trend in market makers' wealth is not driven solely by the stationary benchmark, but persists even when the market environment changes over time. In several cases, this relationship is also quantitatively strong, as reflected in relatively high values of $R^2$, particularly for regime models 4, 5, and 6. This suggests that the positive association between informedness and profitability remains stable across a broad set of structured regime-switching environments. 

At the policy level, the same pattern largely continues to hold. The Aggressive, Quasi-informed aggressive, and Quasi-informed conservative policies display positive slope coefficients in all reported regime specifications. The only systematic exception is the Conservative policy, for which the estimated slope becomes negative in regime models 5 and 7. This suggests that the most inventory-protective policy may be less capable of benefiting from higher market informedness in environments with more frequent or irregular regime changes. By contrast, the quasi-informed policies remain positively related to $\psi$ throughout, which is consistent with the broader interpretation of this study: access to informational signals becomes increasingly valuable as market informedness rises, even when the surrounding market structure is non-stationary.

At the same time, the table also shows that the strength of this relationship depends on the type of non-stationarity considered. In particular, the most irregular specifications, especially regime model 7 in the unknown setting, produce a much weaker fit. Thus, non-stationarity does not eliminate the positive profitability effect of informedness, but it does make the relationship noisier and less uniform, possibly due to limitations of machine learning. Taken together, these results support the view that the main finding of the paper is robust: higher market informedness is generally associated with better market-making performance, and this conclusion survives both temporal extension of the experiment and substantial departures from stationarity.
\FloatBarrier

\begin{table}[ht]
\centering
\scriptsize
\setlength{\tabcolsep}{2.5pt}
\renewcommand{\arraystretch}{1.1}
\resizebox{\textwidth}{!}{%
\begin{tabular}{ll*{15}{r}rrr}
\toprule
\multirow{2}{*}{Regime mode} & \multirow{2}{*}{Policy} & \multicolumn{15}{c}{$\psi$ (informed ratio)} & \multicolumn{3}{c}{OLS on $\psi$} \\
\cmidrule(lr){3-17}\cmidrule(lr){18-20}
 &  & 0.00 & 0.01 & 0.05 & 0.10 & 0.20 & 0.30 & 0.40 & 0.50 & 0.60 & 0.70 & 0.80 & 0.90 & 0.95 & 0.99 & 1.00 & $\hat\beta$ & $\hat\alpha$ & $R^2$ \\
\midrule
\multirow{5}{*}{\texttt{Regime model 2} (known)} & \texttt{agg} & -0.64 & -0.19 & -0.04 & 0.04 & 0.15 & 0.03 & 0.26 & 0.26 & 0.24 & 0.30 & 0.29 & 0.34 & 0.31 & 0.26 & 0.33 & 0.54 & -0.14 & 0.59 \\
 & \texttt{cons} & -0.01 & -0.05 & 0.03 & 0.00 & 0.07 & 0.03 & 0.14 & 0.13 & 0.13 & 0.13 & 0.09 & 0.12 & 0.16 & 0.09 & 0.17 & 0.15 & 0.01 & 0.67 \\
 & \texttt{quasi-inf-agg} & -0.67 & 0.02 & 0.14 & 0.19 & 0.28 & 0.22 & 0.33 & 0.35 & 0.30 & 0.31 & 0.35 & 0.34 & 0.36 & 0.32 & 0.36 & 0.44 & -0.00 & 0.39 \\
 & \texttt{quasi-inf-cons} & -1.36 & 0.02 & 0.16 & 0.23 & 0.33 & 0.17 & 0.34 & 0.33 & 0.30 & 0.39 & 0.36 & 0.38 & 0.37 & 0.28 & 0.39 & 0.61 & -0.12 & 0.27 \\
 & \textbf{Sum} & \textbf{-2.67} & \textbf{-0.19} & \textbf{0.29} & \textbf{0.47} & \textbf{0.84} & \textbf{0.44} & \textbf{1.07} & \textbf{1.06} & \textbf{0.97} & \textbf{1.33} & \textbf{1.09} & \textbf{1.18} & \textbf{1.20} & \textbf{0.95} & \textbf{1.24} & \textbf{1.74} & \textbf{-0.25} & \textbf{0.40} \\
\addlinespace[0.5ex]
\multirow{5}{*}{\texttt{Regime model 2} (unknown)} & \texttt{agg} & -0.52 & -0.14 & -0.03 & -0.02 & 0.10 & -0.05 & 0.01 & 0.12 & 0.09 & 0.23 & 0.18 & 0.25 & 0.21 & 0.35 & 0.25 & 0.47 & -0.17 & 0.69 \\
 & \texttt{cons} & -0.20 & 0.16 & 0.13 & 0.24 & 0.18 & 0.07 & 0.13 & 0.26 & 0.22 & 0.26 & 0.20 & 0.23 & 0.25 & 0.23 & 0.27 & 0.19 & 0.08 & 0.35 \\
 & \texttt{quasi-inf-agg} & -0.90 & 0.10 & 0.23 & 0.30 & 0.40 & 0.22 & 0.27 & 0.40 & 0.38 & 0.39 & 0.31 & 0.43 & 0.37 & 0.41 & 0.42 & 0.47 & 0.01 & 0.29 \\
 & \texttt{quasi-inf-cons} & -0.49 & 0.14 & 0.25 & 0.23 & 0.30 & 0.16 & 0.23 & 0.31 & 0.30 & 0.32 & 0.26 & 0.30 & 0.30 & 0.29 & 0.30 & 0.27 & 0.07 & 0.26 \\
 & \textbf{Sum} & \textbf{-2.11} & \textbf{0.26} & \textbf{0.58} & \textbf{0.76} & \textbf{0.97} & \textbf{0.39} & \textbf{0.63} & \textbf{1.09} & \textbf{0.99} & \textbf{1.20} & \textbf{0.95} & \textbf{1.21} & \textbf{1.13} & \textbf{1.29} & \textbf{1.24} & \textbf{1.40} & \textbf{-0.00} & \textbf{0.39} \\
\addlinespace[0.5ex]
\multirow{5}{*}{\texttt{Regime model 4} (known)} & \texttt{agg} & -0.38 & -0.16 & -0.11 & -0.12 & 0.02 & 0.04 & 0.05 & 0.00 & 0.08 & 0.07 & 0.07 & 0.17 & 0.18 & 0.19 & 0.12 & 0.29 & -0.13 & 0.65 \\
 & \texttt{cons} & -0.03 & -0.10 & -0.06 & -0.10 & -0.01 & 0.02 & 0.02 & 0.08 & 0.09 & 0.03 & -0.02 & 0.04 & 0.01 & 0.09 & 0.08 & 0.12 & -0.05 & 0.58 \\
 & \texttt{quasi-inf-agg} & -0.85 & -0.05 & 0.06 & 0.16 & 0.19 & 0.16 & 0.22 & 0.26 & 0.24 & 0.27 & 0.24 & 0.32 & 0.30 & 0.32 & 0.31 & 0.45 & -0.10 & 0.64 \\
 & \texttt{quasi-inf-cons} & -0.74 & -0.00 & 0.07 & 0.15 & 0.15 & 0.11 & 0.16 & 0.20 & 0.22 & 0.19 & 0.12 & 0.30 & 0.27 & 0.29 & 0.22 & 0.41 & -0.10 & 0.62 \\
 & \textbf{Sum} & \textbf{-2.00} & \textbf{-0.31} & \textbf{-0.04} & \textbf{0.09} & \textbf{0.35} & \textbf{0.33} & \textbf{0.45} & \textbf{0.55} & \textbf{0.63} & \textbf{0.56} & \textbf{0.41} & \textbf{0.82} & \textbf{0.76} & \textbf{0.89} & \textbf{0.73} & \textbf{1.27} & \textbf{-0.38} & \textbf{0.70} \\
\addlinespace[0.5ex]
\multirow{5}{*}{\texttt{Regime model 4} (unknown)} & \texttt{agg} & -0.34 & -0.11 & -0.10 & -0.06 & 0.04 & 0.05 & 0.04 & 0.06 & 0.12 & 0.08 & 0.04 & 0.13 & 0.17 & 0.14 & 0.14 & 0.25 & -0.12 & 0.70 \\
 & \texttt{cons} & 0.02 & -0.02 & 0.06 & 0.02 & 0.10 & 0.04 & 0.06 & 0.09 & 0.11 & 0.08 & 0.04 & 0.10 & 0.17 & 0.12 & 0.15 & 0.14 & -0.01 & 0.74 \\
 & \texttt{quasi-inf-agg} & -0.42 & 0.04 & 0.08 & 0.14 & 0.19 & 0.14 & 0.18 & 0.25 & 0.25 & 0.22 & 0.15 & 0.26 & 0.31 & 0.30 & 0.30 & 0.34 & -0.02 & 0.79 \\
 & \texttt{quasi-inf-cons} & -0.43 & 0.06 & 0.10 & 0.16 & 0.21 & 0.15 & 0.16 & 0.25 & 0.25 & 0.23 & 0.18 & 0.26 & 0.30 & 0.28 & 0.30 & 0.33 & -0.01 & 0.77 \\
 & \textbf{Sum} & \textbf{-1.17} & \textbf{-0.03} & \textbf{0.13} & \textbf{0.26} & \textbf{0.54} & \textbf{0.37} & \textbf{0.44} & \textbf{0.65} & \textbf{0.73} & \textbf{0.61} & \textbf{0.41} & \textbf{0.75} & \textbf{0.96} & \textbf{0.84} & \textbf{0.89} & \textbf{1.05} & \textbf{-0.15} & \textbf{0.79} \\
\addlinespace[0.5ex]
\multirow{5}{*}{\texttt{Regime model 5} (known)} & \texttt{agg} & -0.38 & -0.32 & -0.09 & -0.02 & 0.03 & 0.02 & 0.09 & 0.09 & 0.13 & 0.11 & 0.06 & 0.02 & 0.17 & 0.14 & 0.03 & 0.28 & -0.06 & 0.49 \\
 & \texttt{cons} & 0.17 & 0.10 & 0.09 & 0.09 & 0.08 & 0.08 & 0.05 & 0.06 & 0.03 & 0.04 & -0.03 & -0.05 & -0.06 & 0.00 & -0.00 & -0.12 & 0.12 & 0.57 \\
 & \texttt{quasi-inf-agg} & -0.20 & -0.04 & 0.08 & 0.12 & 0.14 & 0.14 & 0.20 & 0.21 & 0.27 & 0.32 & 0.25 & 0.23 & 0.27 & 0.28 & 0.20 & 0.34 & 0.02 & 0.84 \\
 & \texttt{quasi-inf-cons} & -0.15 & -0.03 & 0.08 & 0.11 & 0.17 & 0.16 & 0.21 & 0.20 & 0.23 & 0.21 & 0.19 & 0.09 & 0.06 & 0.13 & 0.08 & 0.26 & 0.05 & 0.74 \\
 & \textbf{Sum} & \textbf{-0.55} & \textbf{-0.29} & \textbf{0.16} & \textbf{0.30} & \textbf{0.42} & \textbf{0.40} & \textbf{0.56} & \textbf{0.57} & \textbf{0.67} & \textbf{0.69} & \textbf{0.46} & \textbf{0.29} & \textbf{0.44} & \textbf{0.55} & \textbf{0.31} & \textbf{0.76} & \textbf{0.13} & \textbf{0.73} \\
\addlinespace[0.5ex]
\multirow{5}{*}{\texttt{Regime model 5} (unknown)} & \texttt{agg} & -0.32 & -0.22 & -0.09 & -0.05 & 0.01 & 0.05 & 0.08 & 0.06 & 0.10 & 0.13 & 0.12 & 0.06 & 0.12 & 0.13 & 0.06 & 0.25 & -0.11 & 0.78 \\
 & \texttt{cons} & 0.14 & 0.04 & 0.06 & 0.03 & 0.04 & 0.05 & 0.04 & 0.01 & 0.06 & 0.05 & -0.02 & -0.06 & -0.05 & 0.02 & -0.01 & -0.10 & 0.09 & 0.65 \\
 & \texttt{quasi-inf-agg} & -0.29 & 0.01 & 0.06 & 0.09 & 0.12 & 0.15 & 0.21 & 0.18 & 0.24 & 0.28 & 0.25 & 0.18 & 0.22 & 0.23 & 0.19 & 0.33 & 0.00 & 0.88 \\
 & \texttt{quasi-inf-cons} & -0.20 & -0.01 & 0.06 & 0.08 & 0.11 & 0.13 & 0.18 & 0.16 & 0.20 & 0.23 & 0.21 & 0.15 & 0.15 & 0.17 & 0.11 & 0.28 & 0.01 & 0.88 \\
 & \textbf{Sum} & \textbf{-0.67} & \textbf{-0.18} & \textbf{0.09} & \textbf{0.15} & \textbf{0.28} & \textbf{0.37} & \textbf{0.52} & \textbf{0.41} & \textbf{0.60} & \textbf{0.69} & \textbf{0.56} & \textbf{0.33} & \textbf{0.45} & \textbf{0.55} & \textbf{0.35} & \textbf{0.75} & \textbf{-0.01} & \textbf{0.92} \\
\addlinespace[0.5ex]
\multirow{5}{*}{\texttt{Regime model 6} (known)} & \texttt{agg} & -0.31 & -0.17 & -0.11 & -0.10 & 0.03 & 0.06 & 0.04 & 0.05 & 0.11 & 0.11 & 0.10 & 0.17 & 0.19 & 0.19 & 0.15 & 0.29 & -0.11 & 0.76 \\
 & \texttt{cons} & -0.04 & -0.08 & -0.04 & -0.05 & 0.06 & 0.04 & 0.02 & 0.07 & 0.11 & 0.06 & 0.02 & 0.10 & 0.12 & 0.13 & 0.12 & 0.21 & -0.08 & 0.79 \\
 & \texttt{quasi-inf-agg} & -0.70 & -0.04 & 0.07 & 0.13 & 0.16 & 0.16 & 0.22 & 0.28 & 0.24 & 0.30 & 0.26 & 0.33 & 0.33 & 0.34 & 0.34 & 0.46 & -0.09 & 0.75 \\
 & \texttt{quasi-inf-cons} & -0.66 & -0.02 & 0.08 & 0.14 & 0.19 & 0.17 & 0.21 & 0.28 & 0.27 & 0.28 & 0.20 & 0.34 & 0.33 & 0.33 & 0.30 & 0.44 & -0.09 & 0.73 \\
 & \textbf{Sum} & \textbf{-1.71} & \textbf{-0.31} & \textbf{0.00} & \textbf{0.12} & \textbf{0.44} & \textbf{0.43} & \textbf{0.49} & \textbf{0.67} & \textbf{0.73} & \textbf{0.75} & \textbf{0.58} & \textbf{0.95} & \textbf{0.98} & \textbf{0.99} & \textbf{0.91} & \textbf{1.40} & \textbf{-0.36} & \textbf{0.82} \\
\addlinespace[0.5ex]
\multirow{5}{*}{\texttt{Regime model 6} (unknown)} & \texttt{agg} & -0.29 & -0.10 & -0.09 & -0.05 & 0.04 & 0.05 & 0.04 & 0.05 & 0.08 & 0.09 & 0.06 & 0.12 & 0.15 & 0.15 & 0.12 & 0.22 & -0.12 & 0.83 \\
 & \texttt{cons} & 0.00 & -0.01 & 0.05 & 0.03 & 0.10 & 0.06 & 0.05 & 0.08 & 0.11 & 0.08 & 0.06 & 0.10 & 0.17 & 0.15 & 0.15 & 0.16 & -0.00 & 0.86 \\
 & \texttt{quasi-inf-agg} & -0.41 & 0.03 & 0.08 & 0.12 & 0.18 & 0.16 & 0.18 & 0.24 & 0.23 & 0.22 & 0.14 & 0.26 & 0.30 & 0.30 & 0.30 & 0.34 & -0.03 & 0.85 \\
 & \texttt{quasi-inf-cons} & -0.41 & 0.05 & 0.10 & 0.14 & 0.20 & 0.16 & 0.17 & 0.24 & 0.25 & 0.23 & 0.18 & 0.25 & 0.29 & 0.28 & 0.28 & 0.33 & -0.02 & 0.82 \\
 & \textbf{Sum} & \textbf{-1.11} & \textbf{-0.03} & \textbf{0.13} & \textbf{0.24} & \textbf{0.52} & \textbf{0.43} & \textbf{0.44} & \textbf{0.61} & \textbf{0.67} & \textbf{0.62} & \textbf{0.44} & \textbf{0.74} & \textbf{0.91} & \textbf{0.87} & \textbf{0.85} & \textbf{1.04} & \textbf{-0.17} & \textbf{0.87} \\
\addlinespace[0.5ex]
\multirow{5}{*}{\texttt{Regime model 7} (known)} & \texttt{agg} & -0.40 & -0.29 & -0.04 & 0.02 & 0.08 & 0.08 & 0.13 & 0.21 & 0.25 & 0.13 & -0.03 & -0.12 & 0.09 & 0.30 & 0.10 & 0.26 & -0.10 & 0.28 \\
 & \texttt{cons} & 0.14 & -0.00 & 0.03 & -0.09 & -0.02 & 0.04 & -0.04 & 0.02 & 0.05 & 0.11 & -0.18 & -0.15 & -0.12 & -0.16 & 0.06 & -0.17 & 0.07 & 0.40 \\
 & \texttt{quasi-inf-agg} & -0.39 & -0.07 & 0.04 & 0.27 & 0.25 & 0.22 & 0.28 & 0.41 & 0.51 & 0.27 & 0.24 & 0.35 & 0.28 & 1.08 & 0.43 & 0.88 & -0.09 & 0.53 \\
 & \texttt{quasi-inf-cons} & -0.19 & -0.12 & 0.05 & 0.04 & 0.04 & 0.10 & 0.04 & 0.03 & 0.17 & 0.06 & -0.07 & -0.10 & -0.03 & -0.05 & 0.05 & 0.13 & -0.03 & 0.21 \\
 & \textbf{Sum} & \textbf{-0.85} & \textbf{-0.48} & \textbf{0.09} & \textbf{0.24} & \textbf{0.35} & \textbf{0.44} & \textbf{0.42} & \textbf{0.67} & \textbf{0.97} & \textbf{0.58} & \textbf{-0.03} & \textbf{-0.01} & \textbf{0.21} & \textbf{1.17} & \textbf{0.63} & \textbf{1.10} & \textbf{-0.15} & \textbf{0.34} \\
\addlinespace[0.5ex]
\multirow{5}{*}{\texttt{Regime model 7} (unknown)} & \texttt{agg} & -0.26 & -0.07 & -0.09 & -0.04 & 0.01 & 0.10 & 0.06 & -0.21 & 0.07 & -0.12 & -0.31 & 0.12 & -0.01 & -0.13 & 0.04 & 0.05 & -0.08 & 0.02 \\
 & \texttt{cons} & -0.01 & 0.06 & 0.10 & 0.13 & 0.09 & 0.07 & 0.04 & -0.12 & 0.08 & -0.00 & -0.30 & 0.01 & 0.04 & -0.13 & 0.05 & -0.13 & 0.07 & 0.20 \\
 & \texttt{quasi-inf-agg} & -0.10 & 0.04 & 0.12 & 0.18 & 0.14 & 0.19 & 0.22 & 0.17 & 0.30 & 0.22 & 0.06 & 0.22 & 0.33 & 0.18 & 0.28 & 0.18 & 0.08 & 0.39 \\
 & \texttt{quasi-inf-cons} & -0.24 & 0.06 & 0.06 & 0.10 & 0.12 & 0.14 & 0.14 & 0.12 & 0.24 & 0.06 & -0.09 & 0.21 & 0.21 & 0.07 & 0.11 & 0.11 & 0.03 & 0.12 \\
 & \textbf{Sum} & \textbf{-0.62} & \textbf{0.08} & \textbf{0.18} & \textbf{0.36} & \textbf{0.37} & \textbf{0.50} & \textbf{0.46} & \textbf{-0.04} & \textbf{0.70} & \textbf{0.15} & \textbf{-0.64} & \textbf{0.55} & \textbf{0.57} & \textbf{-0.01} & \textbf{0.48} & \textbf{0.20} & \textbf{0.10} & \textbf{0.04} \\
\bottomrule
\end{tabular}%
}
\caption{Mean per-step wealth change by regime model, policy, and informed ratio $\psi$. The \textbf{Sum} row reports the sum across the four policies within each regime specification. The final three columns give OLS estimates from regressing mean per-step wealth change on the informed ratio $\psi$. All values are rounded to two decimals. Here \texttt{agg}, \texttt{cons}, \texttt{quasi-inf-agg}, and \texttt{quasi-inf-cons} denote, respectively, the Aggressive, Conservative, Quasi-informed aggressive, and Quasi-informed conservative strategies.}
\vspace{3mm}
\footnotesize
\begin{tabular}{@{}ll@{}}
\textbf{Regime mapping (table order):} & \\
Regime model 1 & \texttt{One change at a fixed time to a fixed regime} \\
Regime model 2 & \texttt{One change at a fixed time to a random regime} \\
Regime model 3 & \texttt{One change at a random time to a fixed regime} \\
Regime model 4 & \texttt{One change at a random time to a random regime} \\
Regime model 5 & \texttt{Fixed number of changes on a fixed grid to random regimes} \\
Regime model 6 & \texttt{Fixed number of changes on a random grid to random regimes} \\
Regime model 7 & \texttt{Exponential waiting times for random regimes} \\
\end{tabular}
\label{tab:mean_wealth_change_by_psi}
\end{table}
\FloatBarrier

\section{Conclusions}

The main result of this paper is that market-making profitability is, on average, increasing in market informedness. In the simulated environment, higher values of the informed ratio $\psi$ are generally associated with higher terminal wealth, with the strongest improvement occurring when moving away from the least informed market regimes. At the same time, the relationship is not strictly monotonic: at high levels of informedness it becomes flatter and occasionally locally non-monotonic. The non-stationary experiments show that this pattern is not an artifact of the stationary baseline. The positive association remains visible across a broad class of regime-switching specifications, although it becomes weaker and less regular in the most irregular environments.

At first sight, this conclusion appears to differ from the findings of \cite{BMB2025}, where a larger share of informed traders worsens market-maker performance. However, the difference is naturally explained by the underlying market structure. In \cite{BMB2025}, the market maker quotes around an exogenous reference price under simpler order-flow assumptions. In the present setting, by contrast, prices are formed endogenously through the interaction of market and limit orders, order flow is self-exciting, and trade sizes are heterogeneous. As a result, informed trading has two opposing effects: it increases adverse-selection risk, but it also improves price discovery by reducing the persistence of mispricing between transaction prices and fundamentals. The results suggest that, outside the least informed environments, the price-discovery channel can dominate the toxic-flow channel.

More broadly, the paper shows that in richer order-book environments, informed trading need not reduce profitability; instead, it may improve market makers' performance by limiting prolonged deviations from fundamental value.

\subsection{Limitations and future work}

The analysis remains subject to several limitations that also point to natural directions for future research. First, although the model captures endogenous price formation, self-exciting order flow, and strategic competition, it is not yet tightly calibrated to high-frequency order-book data. A closer empirical calibration of queue dynamics, latency, participant heterogeneity, and other market frictions would help assess the external validity of the results more precisely. Second, the informational environment remains stylized, relying on a binary informed--uninformed decomposition. Extending the model to richer information structures and allowing agents to infer market regimes from partially observed signals would bring the setup closer to actual trading environments. Finally, an important theoretical extension would be to characterize more sharply the threshold at which the price-discovery effect of informed trading ceases to dominate the toxic-flow effect. Such an analysis could connect the reinforcement-learning evidence developed here more tightly to the comparative statics of the analytical market-microstructure literature.

\section{Declarations}
\subsection{Funding}
The authors declare that no funds, grants, or other support were received during the preparation of this manuscript.
\subsection{Competing Interests}
The authors have no relevant financial or non-financial interests to disclose.
\subsection{Author Contributions}
Konrad Ochedzan contributed to the study conception and design, methodology, software implementation, numerical experiments, formal analysis, visualization, and writing of the original draft. Nino Antulov-Fantulin contributed to the study conception and design, methodology, and supervision. Both authors read and approved the final manuscript.
\subsection{Code Availability}
The code used to generate the simulation results and figures is available at \href{https://github.com/konradochedzan/Market-Making-with-Information}{Github}.
\subsection{Gen AI Use}
Generative Artificial Intelligence, namely ChatGPT from OpenAI, has been used as an editorial and programming aid for selected code and text revisions. The research design, mathematical formulation, implementation decisions, validation of outputs, and final responsibility for the code and manuscript content remain with Konrad Ochedzan. ChatGPT is not an author or contributor in the scholarly-authorship sense, and its use should be understood as assisted editing and development support under human review.
\begin{appendices}
\section{Proofs for finite-horizon stability}\label{app:stability_proofs}

This appendix proves the stability statements in Section~\ref{sec:hawkes_stability}. Throughout, $N([0,T])=\sum_{i=1}^{C}N_i([0,T])$, $C=2K$, and $\lambda_i(t)=G_i(x(t^-))\lambda_i^H(t)$ denotes the predictable effective intensity. The convention $I_{t^-}=S_{t^-}-\Pmicro_{t^-}$ is used at possible jump times of the book. Since the fundamental process is continuous in the model, $S_{t^-}=S_t$.

\subsection{Corrective and destabilizing events}

At any time $t$, define the predictable sets of corrective and destabilizing types by
\begin{equation}
  \mathrm{corr}(t) :=
  \begin{cases}
    \{1,\ldots,K\}, & \text{if } I_{t^-}\ge 0,\\
    \{K+1,\ldots,2K\}, & \text{if } I_{t^-}<0,
  \end{cases}
  \qquad
  \mathrm{dest}(t):=\{1,\ldots,2K\}\setminus\mathrm{corr}(t).
  \label{eq:corr_dest_sets}
\end{equation}
Thus, if the microprice is below the fundamental value, buys are corrective; if the microprice is above the fundamental value, sells are corrective. The convention at $I_{t^-}=0$ is immaterial for the bounds.

\begin{lemma}[Destabilizing gain is at most one]\label{lem:dest_gain}
Under Assumptions \textup{(S1)--(S6)}, $G_i(x(t^-))\le1$ for every $t\in[0,T]$ and every $i\in\mathrm{dest}(t)$. Economically, order flow that would push prices farther away from fundamentals is not amplified by the informedness mechanism; it remains at its baseline Hawkes level or is dampened.
\end{lemma}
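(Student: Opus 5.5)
The plan is a direct case analysis on the sign of the pre-jump mispricing $I_{t^-}$, combined with the explicit form of the gains in \eqref{eq:gains_ask}--\eqref{eq:gains_mixture}. Fix $t\in[0,T]$ and $i\in\mathrm{dest}(t)$. First dispose of the degenerate case: if the side of the book that type $i$ would consume is empty at $t^-$, then by (S4) $G_i(x(t^-))=0\le 1$ and there is nothing to prove. Otherwise the relevant side is non-empty, so (S6) gives $\delta^a_{t^-}\ge 0$ (respectively $\delta^b_{t^-}\ge 0$). Only this nonnegativity is needed, not the upper bound $\dmax$.

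Next, treat the case $I_{t^-}\ge 0$. Here $\mathrm{corr}(t)=\{1,\dots,K\}$, so a destabilizing $i$ lies in $\{K+1,\dots,2K\}$ and $G_i=G^b$. Using $\eta>0$ and $\zeta>0$ from (S1), I would bound the two components:
\[
g^{U,b}=e^{-\eta\delta^b_{t^-}}\le 1,\qquad g^{I,b}=e^{-\eta\delta^b_{t^-}-\zeta I_{t^-}}\le 1,
\]
since both exponents are nonpositive. The mixture \eqref{eq:gains_mixture} is a convex combination with weights $\psi,1-\psi\in[0,1]$. Hence $G^b=\psi g^{I,b}+(1-\psi)g^{U,b}\le 1$.

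The case $I_{t^-}<0$ is symmetric. A destabilizing $i$ now lies in $\{1,\dots,K\}$, so $G_i=G^a$. We have $g^{I,a}=\exp(-\eta\delta^a_{t^-}+\zeta I_{t^-})\le 1$ because $\zeta I_{t^-}<0$, and $g^{U,a}\le 1$ as before, so again $G^a\le 1$.

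The only real obstacle is bookkeeping. The gain must be evaluated at the predictable state $x(t^-)$, so the sign of $I$ used to define $\mathrm{dest}(t)$ must be the same one entering $G_i(x(t^-))$. This consistency is guaranteed by the convention stated at the start of the appendix. The tie $I_{t^-}=0$ falls into the first case, where the informed factor equals $e^{-\eta\delta^b}\le 1$, so it causes no issue. I would conclude by noting that the bound is uniform in $t$, in $\psi$, and in the agents' quoting policies. The bound does not require $\psi>0$, even though (S1) assumes it.
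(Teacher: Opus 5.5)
Your proof is correct and follows the paper's argument. Both split on the sign of $I_{t^-}$, observe that both exponents in the destabilizing-side gain are nonpositive so the $\psi$-mixture is at most one, and treat the tie $I_{t^-}=0$ via the convention in the definition of $\mathrm{corr}(t)$. Your separate handling of an empty side through (S4), and your remark that only $\delta\ge 0$ from (S6) is needed, add a little care that the paper leaves implicit.
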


\begin{proof}
Suppose first that $I_{t^-}>0$. Then sells are destabilizing and
\[
  G^b(x(t^-))
  =\psi e^{-\eta\delta^b_{t^-}-\zeta I_{t^-}}
  +(1-\psi)e^{-\eta\delta^b_{t^-}}.
\]
Both exponents are non-positive because $\eta,\zeta>0$, $\delta^b_{t^-}\ge0$, and $I_{t^-}>0$. Hence $G^b(x(t^-))\le1$. The case $I_{t^-}<0$ is symmetric. When $I_{t^-}=0$, the destabilizing side under the convention in \eqref{eq:corr_dest_sets} has gain $e^{-\eta\delta^b_{t^-}}\le1$.
\end{proof}

\begin{lemma}[Corrective intensity lower bound]\label{lem:corr_lower}
Let $\gamma$ be defined by \eqref{eq:gamma}. At every time $t$ such that the corrective side of the book is non-empty at $t^-$,
\begin{equation}
  \sum_{i\in\mathrm{corr}(t)}\lambda_i(t)
  \ge
  \gamma e^{\zeta\abs{I_{t^-}}}.
  \label{eq:corr_lower}
\end{equation}
Economically, this lower bound is the model's correction channel: whenever liquidity is available on the side that would move prices toward fundamentals, informed traders become more aggressive as the asset becomes more underpriced or overpriced.
\end{lemma}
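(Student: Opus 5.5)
The plan is to bound the total corrective intensity from below by bounding each factor of $\lambda_i(t)=G_i(x(t^-))\lambda_i^H(t)$ separately. Throughout, fix a time $t$ at which the corrective side is non-empty at $t^-$, and treat the case $I_{t^-}\ge0$ in detail; the case $I_{t^-}<0$ follows by the bid/ask symmetry already used in Lemma~\ref{lem:dest_gain}.

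\textbf{Step 1 (lower bound on the gain).} Suppose $I_{t^-}\ge 0$, so the corrective types are $i\in\{1,\dots,K\}$ with gain $G^a$. Because the ask side is non-empty, the empty-side shutdown (S4) is not active, and (S6) gives $\delta^a_{t^-}\le\dmax$. Both terms of the mixture \eqref{eq:gains_mixture} are non-negative, so I drop the uninformed term. This gives
\[
G^a(x(t^-))\ge \psi e^{-\eta\delta^a_{t^-}+\zeta I_{t^-}}\ge \psi e^{-\eta\dmax}e^{\zeta\abs{I_{t^-}}},
\]
using $\eta>0$ from (S1) and $I_{t^-}=\abs{I_{t^-}}$ in this case. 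For $I_{t^-}<0$ the sell gain $G^b$ contains $e^{-\zeta I_{t^-}}=e^{\zeta\abs{I_{t^-}}}$, and the same bound holds with $\delta^b$ in place of $\delta^a$.

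\textbf{Step 2 (lower bound on the raw intensity).} Since $\alpha_{ijr}\ge0$, every auxiliary excitation state $Z_{ijr}$ is non-negative. This holds for the warm-start values too, because $\lambda^*-\mu=(I-A)^{-1}\mu-\mu=\sum_{k\ge1}A^k\mu\ge0$ componentwise. Hence $\lambda_i^H(t)\ge\mu_i\ge\mu_{\min}$.

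\textbf{Step 3 (combine).} Multiplying the two bounds gives $\lambda_i(t)\ge\gamma e^{\zeta\abs{I_{t^-}}}$ for each single corrective type $i$. Summing over the $K\ge1$ non-negative corrective terms only strengthens this, so the sum is at least $\gamma e^{\zeta\abs{I_{t^-}}}$, which is \eqref{eq:corr_lower}.

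The only delicate point is Step 2, specifically the non-negativity of the warm-start excess $\lambda^*_i-\mu_i$. I would handle it with the Neumann series under $\rho(A)<1$. If one prefers not to rely on the warm start, the bound $\lambda_i^H\ge\mu_i$ can instead be taken directly from the simulator's initialization together with nonnegative kernels. A minor bookkeeping point is that the non-emptiness hypothesis is used twice: to apply (S6), and to exclude the shutdown gain $G=0$ from (S4).
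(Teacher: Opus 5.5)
Your proposal is correct and follows the paper's proof essentially line by line. The shared steps are: drop the uninformed term of the mixture gain, use $\delta^a_{t^-}\le\delta_{\max}$ from (S6) and $\lambda_i^H(t)\ge\mu_i\ge\mu_{\min}$, sum over the corrective types, and treat the sell case by symmetry. Your Neumann-series check that the warm-start excess $\boldsymbol{\lambda}^*-\boldsymbol{\mu}=\sum_{k\ge1}A^k\boldsymbol{\mu}$ is non-negative adds a bit of rigor that the paper takes for granted when it asserts $\lambda_i^H(t)\ge\mu_i$.
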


\begin{proof}
Suppose $I_{t^-}>0$ and the ask side is non-empty. Then corrective types are buys. For every $i\le K$,
\begin{align*}
  \lambda_i(t)
  &=G^a(x(t^-))\lambda_i^H(t)\\
  &\ge \psi e^{-\eta\delta^a_{t^-}+\zeta I_{t^-}}\lambda_i^H(t)\\
  &\ge \psi e^{-\eta\dmax}\mu_i e^{\zeta I_{t^-}}
  \ge \gamma e^{\zeta I_{t^-}},
\end{align*}
where we used $\delta^a_{t^-}\le\dmax$ and $\lambda_i^H(t)\ge\mu_i$. Summing over the corrective types gives \eqref{eq:corr_lower}. The case $I_{t^-}<0$ is symmetric, with sells as the corrective types and $\abs{I_{t^-}}=-I_{t^-}$. At $I_{t^-}=0$, the same argument gives the lower bound $\gamma$.
\end{proof}

\subsection{Event-count bounds}

\begin{proof}[Proof of Proposition~\ref{prop:nonexplosion}]
Fix an inter-decision interval $[\tdec_n,\tdec_{n+1})$. By Assumptions \textup{(S1)--(S6)}, the total resting depth on each side at $\tdec_n$ is at most $\Dtot$, and no new limit orders are posted before $\tdec_{n+1}$. Each market buy consumes at least $\qmin$ units from the ask side, and each market sell consumes at least $\qmin$ units from the bid side. Once a side is depleted, the corresponding gain is zero, so no further events of that type can execute before the next decision time. Therefore,
\[
  N([\tdec_n,\tdec_{n+1}))
  \le \frac{\Dtot}{\qmin}+\frac{\Dtot}{\qmin}
  =\frac{2\Dtot}{\qmin}.
\]
Summing over the $\Nsteps$ inter-decision intervals yields $N([0,T])\le2\Dtot\Nsteps/\qmin=\Nmax$ almost surely. Taking expectations gives the final claim.
\end{proof}

The next auxiliary proposition is not needed for the pathwise non-explosion bound, but it records how the destabilizing part of the process is controlled by the Hawkes branching structure. Let $A$ be the integrated kernel matrix in \eqref{eq:integrated_kernel} and $\|A\|_1:=\max_{1\le j\le C}\sum_{i=1}^{C}A_{ij}$.

\begin{proposition}[Auxiliary moment bound]\label{prop:moment}
If, in addition to Assumptions \textup{(S1)--(S6)}, $\|A\|_1<1$, then
\begin{equation}
  \E[N([0,T])]
  \le
  \frac{\Nmax+\|\boldsymbol{\mu}\|_1T}{1-\|A\|_1},
  \label{eq:moment_bound}
\end{equation}
where $\|\boldsymbol{\mu}\|_1=\sum_{i=1}^{C}\mu_i$. If $\tdec_{n+1}-\tdec_n\ge\Dtmin>0$, then
\begin{equation}
  \E[N([0,T])]
  \le
  \frac{2\Dtot/(\qmin\Dtmin)+\|\boldsymbol{\mu}\|_1}{1-\|A\|_1}\,T.
  \label{eq:moment_linear}
\end{equation}
Economically, this bound allows the model to generate clustered high-frequency order flow without making total trading volume explode, which is essential for comparing market-maker profitability across informedness regimes.
\end{proposition}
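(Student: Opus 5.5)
We prove the auxiliary moment bound, Proposition~\ref{prop:moment}, through the compensator identity together with the split into corrective and destabilizing types.

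\textbf{Step 1: the expectation identity.} The total count is a.s.\ bounded by $\Nmax$ (Proposition~\ref{prop:nonexplosion}), and the intensities are predictable. Hence the compensator identity gives
\[
\E[N([0,T])]=\E\Big[\int_0^T\sum_{i=1}^C\lambda_i(t)\,\dd t\Big],
\]
and every quantity involved is finite, so all rearrangements below are legitimate. We split the integrand into $\sum_{i\in\mathrm{corr}(t)}\lambda_i(t)$ and $\sum_{i\in\mathrm{dest}(t)}\lambda_i(t)$. Taking expectations of the corresponding counting processes, the corrective part equals the expected number of corrective events. That number is at most $N([0,T])\le\Nmax$ a.s., so its expected compensator is at most $\Nmax$.

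\textbf{Step 2: the destabilizing part.} By Lemma~\ref{lem:dest_gain}, $G_i\le1$ for $i\in\mathrm{dest}(t)$, so $\lambda_i(t)\le\lambda_i^H(t)$ there. Expanding the raw intensity \eqref{eq:hawkes_raw_vector} gives
\[
\sum_{i\in\mathrm{dest}(t)}\lambda_i(t)\le\|\boldsymbol{\mu}\|_1+\sum_{i=1}^C\sum_{j=1}^C\int_0^{t}\phi_{i\leftarrow j}(t-s)\,\dd N_j(s).
\]
We integrate over $[0,T]$ and apply Fubini. Each event at time $s$ of type $j$ then contributes at most $\sum_i\int_0^{T-s}\phi_{i\leftarrow j}\le\sum_iA_{ij}\le\|A\|_1$. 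The destabilizing compensator is therefore bounded by $\|\boldsymbol{\mu}\|_1T+\|A\|_1N([0,T])$.

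There is one subtlety. The warm-start excitations $Z_{ijr}(0)$ add a further deterministic term. We either absorb it by treating the $Z(0)$ mass as pre-existing, or, more simply, state the bound for $Z(0)=0$. This is the step I would check most carefully against the statement, since the statement as written contains no such term. With zero initial excitation it disappears.

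\textbf{Step 3: closing the bound.} Combining Steps 1 and 2,
\[
\E[N]\le\Nmax+\|\boldsymbol{\mu}\|_1T+\|A\|_1\E[N].
\]
Since $\E[N]\le\Nmax<\infty$, we may rearrange using $\|A\|_1<1$, which yields \eqref{eq:moment_bound}.

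\textbf{Linear-in-$T$ version.} Under $\tdec_{n+1}-\tdec_n\ge\Dtmin$ we have $\Nsteps\le T/\Dtmin$. Hence $\Nmax\le 2\Dtot T/(\qmin\Dtmin)$, and substituting this into \eqref{eq:moment_bound} gives \eqref{eq:moment_linear}.
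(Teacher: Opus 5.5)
Your proof is correct and follows the paper's argument essentially line by line: corrective count $\le\Nmax$, then Lemma~\ref{lem:dest_gain} plus Fubini to bound the destabilizing compensator by $\|\boldsymbol{\mu}\|_1T+\|A\|_1N([0,T])$, then rearrange; the paper also silently works with \eqref{eq:hawkes_raw_vector}, i.e.\ zero initial excitation, so your warm-start caveat is fair. Note, though, that you already use $N([0,T])\le\Nmax$ a.s., and both right-hand sides are at least $\Nmax$ (using $\Nsteps\le T/\Dtmin$ for the second), so both displayed bounds follow at once from Proposition~\ref{prop:nonexplosion} and hold even with the warm-start term.
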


\begin{proof}
Let $N^{\mathrm{corr}}$ and $N^{\mathrm{dest}}$ count corrective and destabilizing events according to \eqref{eq:corr_dest_sets}. Corrective events consume finite depth, so the same depth argument used above gives
\begin{equation}
  N^{\mathrm{corr}}([0,T])\le\Nmax
  \qquad\text{a.s.}
  \label{eq:corr_bound}
\end{equation}
By Lemma~\ref{lem:dest_gain}, destabilizing gains are bounded by one. Hence, using the compensator identity and non-negativity of the raw Hawkes intensities,
\begin{equation}
  \E[N^{\mathrm{dest}}([0,T])]
  \le
  \E\!\left[\int_0^T\sum_{i=1}^{C}\lambda_i^H(t)\,\dd t\right].
  \label{eq:dest_comp}
\end{equation}
For the raw Hawkes process,
\[
  \int_0^T\sum_{i=1}^{C}\lambda_i^H(t)\,\dd t
  =\|\boldsymbol{\mu}\|_1T
  +\sum_{i,j}\int_0^T\int_0^{t^-}\phi_{i\leftarrow j}(t-s)\,\dd N_j(s)\,\dd t.
\]
Fubini--Tonelli and the definition of $A_{ij}$ give
\[
  \int_0^T\int_0^{t^-}\phi_{i\leftarrow j}(t-s)\,\dd N_j(s)\,\dd t
  \le A_{ij}N_j([0,T]).
\]
Therefore,
\[
  \E\!\left[\int_0^T\sum_{i=1}^{C}\lambda_i^H(t)\,\dd t\right]
  \le
  \|\boldsymbol{\mu}\|_1T
  +\sum_{i,j}A_{ij}\E[N_j([0,T])]
  \le
  \|\boldsymbol{\mu}\|_1T+\|A\|_1\E[N([0,T])].
\]
Combining this inequality with \eqref{eq:corr_bound} and \eqref{eq:dest_comp} yields
\[
  \E[N([0,T])]
  \le
  \Nmax+\|\boldsymbol{\mu}\|_1T+\|A\|_1\E[N([0,T])].
\]
Rearranging gives \eqref{eq:moment_bound}. The linear bound \eqref{eq:moment_linear} follows from $\Nsteps\le T/\Dtmin$.
\end{proof}

\subsection{Exponential integrability and occupation time}

\begin{proof}[Proof of Proposition~\ref{prop:exp_int}]
Define the aggregate corrective intensity
\[
  \Lambda^{\mathrm{corr}}(t):=\sum_{i\in\mathrm{corr}(t)}\lambda_i(t).
\]
It is predictable, and by the compensator identity,
\[
  \E\!\left[\int_0^T\Lambda^{\mathrm{corr}}(t)\,\dd t\right]
  =\E[N^{\mathrm{corr}}([0,T])]
  \le\Nmax,
\]
where the final inequality is \eqref{eq:corr_bound}. On the set where the corrective side is non-empty, Lemma~\ref{lem:corr_lower} gives $\Lambda^{\mathrm{corr}}(t)\ge\gamma e^{\zeta\abs{I_{t^-}}}$. When the corrective side is empty, the corresponding gain is zero. Hence
\[
  \gamma\E\!\left[\int_0^T e^{\zeta\abs{I_{t^-}}}
  \ind_{\{\text{corrective side non-empty at }t^-\}}\,\dd t\right]
  \le
  \E\!\left[\int_0^T\Lambda^{\mathrm{corr}}(t)\,\dd t\right]
  \le\Nmax,
\]
which proves \eqref{eq:exp_int}.

For the full-interval bound, decompose
\begin{align*}
  \int_0^T e^{\zeta\abs{I_{t^-}}}\,\dd t
  &=
  \int_0^T e^{\zeta\abs{I_{t^-}}}
    \ind_{\{\text{corrective side non-empty at }t^-\}}\,\dd t\\
  &\quad+
  \int_0^T e^{\zeta\abs{I_{t^-}}}
    \ind_{\{\text{corrective side empty at }t^-\}}\,\dd t.
\end{align*}
The first term is controlled by \eqref{eq:exp_int}; the second is at most $e^{\zeta\sup_{t\le T}\abs{I_{t^-}}}T_{\mathrm{ecs}}$. Taking expectations proves \eqref{eq:exp_int_tecs}.
\end{proof}

\begin{proof}[Proof of Corollary~\ref{cor:occupation}]
For every $M>0$,
\[
  \ind_{\{\abs{I_{t^-}}>M\}}
  \le e^{-\zeta M}e^{\zeta\abs{I_{t^-}}}.
\]
Multiplying by the indicator that the corrective side is non-empty, integrating, and applying \eqref{eq:exp_int} gives \eqref{eq:occupation}. For the full-interval bound, split the occupation time according to whether the corrective side is empty:
\begin{align*}
  \E\!\left[\int_0^T\ind_{\{\abs{I_{t^-}}>M\}}\,\dd t\right]
  &\le
  \E\!\left[\int_0^T\ind_{\{\abs{I_{t^-}}>M\}}
  \ind_{\{\text{corrective side non-empty at }t^-\}}\,\dd t\right]
  +\E[T_{\mathrm{ecs}}]\\
  &\le \frac{\Nmax}{\gamma}e^{-\zeta M}+\E[T_{\mathrm{ecs}}].
\end{align*}
\end{proof}

\subsection{Pathwise mispricing control}

\begin{lemma}[Deterministic microprice bound]\label{lem:Pmicro_bound}
Under Assumptions \textup{(S1)--(S6)},
\begin{equation}
  \sup_{t\in[0,T]}\abs{\Pmicro_t-\Pmicro_0}
  \le
  2\dmax(\Nmax+\Nsteps)
  =\Delta_P.
  \label{eq:Pmicro_bound}
\end{equation}
Economically, the learned market makers operate in a discrete LOB with controlled quote distances, so endogenous price formation remains bounded even when liquidity demand is clustered.
\end{lemma}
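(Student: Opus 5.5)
The plan is to show that $\Pmicro_t$ is piecewise constant in $t$, jumping only at market-taker events and at decision times, and that every jump has size at most $2\dmax$. The bound then follows by counting jumps.

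\emph{Step 1: locating the jumps.} Between consecutive market-taker events inside an inter-decision interval $(\tdec_n,\tdec_{n+1})$, the book is frozen: no new limit orders arrive, and cancellations happen only at decision times or immediately after taker trades. So $A_t$, $B_t$, $Q^a_t$ and $Q^b_t$ are constant there, and by \eqref{eq:microprice} so is $\Pmicro_t$. Hence $\Pmicro$ changes only at taker event times (together with the deterministic cancellations that follow them) and at the $\Nsteps$ decision times. By Proposition~\ref{prop:nonexplosion} the number of taker events is at most $\Nmax$ almost surely. The total number of jump times is therefore at most $\Nmax+\Nsteps$.

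\emph{Step 2: size of a single jump.} Consider one jump time $\tau$, with both sides non-empty before and after. By (S6), $A_{\tau^-}-\dmax\le\Pmicro_{\tau^-}\le B_{\tau^-}+\dmax$, and since $A\ge\Pmicro\ge B$, the spread $A-B=\delta^a+\delta^b$ is at most $2\dmax$. The same holds at $\tau$.

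I would bound $|\Pmicro_\tau-\Pmicro_{\tau^-}|$ by following one side of the book. For a taker buy, the bid side is unchanged, so $B_\tau=B_{\tau^-}$. Then $\Pmicro_\tau\ge B_\tau=B_{\tau^-}\ge\Pmicro_{\tau^-}-\dmax$, and $\Pmicro_\tau\le B_\tau+\dmax=B_{\tau^-}+\dmax\le\Pmicro_{\tau^-}+\dmax$. This gives a jump of at most $\dmax$, comfortably within $2\dmax$; taker sells are symmetric.

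For a decision time the quotes may be reposted arbitrarily, so the one-sided trick fails. I expect this to be the main obstacle. The paper's own factor $2\dmax$ per jump suggests the intended argument uses the allowed posting range relative to the previous mid or microprice. Quotes are admissible only within a bounded distance of the current reference price, and the reposted microprice lies within the new spread. Chaining the two gives $|\Pmicro_{\tau}-\Pmicro_{\tau^-}|\le\delta^{a}+\delta^{b}\le 2\dmax$. I would try first to formalize this by reading (S6) together with the admissible posting range from the implementation.

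\emph{Step 3: empty sides and summation.} When a side is empty, $\Pmicro$ falls back to $\Plast$, and transaction prices occur within the quoted range, so the same $2\dmax$ control should persist. This needs a short separate check. Finally, the telescoping sum over at most $\Nmax+\Nsteps$ jumps of size at most $2\dmax$ gives \eqref{eq:Pmicro_bound}.
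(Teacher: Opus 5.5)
Your architecture is the paper's. $\Pmicro$ is piecewise constant, it changes at most $\Nmax+\Nsteps$ times (by Proposition~\ref{prop:nonexplosion} and the decision grid), each change is at most $2\dmax$, and the bound telescopes. At taker events you are sharper than the paper: because the opposite side is untouched, you get a jump of at most $\dmax$, whereas the paper argues only from $A_t-B_t\le2\dmax$.

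The genuine gap is the decision-time step, and it cannot be closed by reading (S6) more carefully or by importing the paper's argument. The paper simply asserts that, since $A_t-B_t\le2\dmax$, the old and new microprices are within $2\dmax$. But (S6) only constrains $\delta^a_t,\delta^b_t$ relative to the microprice at the same instant. So the pre- and post-decision spreads, each of width at most $2\dmax$, can be disjoint and arbitrarily far apart. Nothing in (S1)--(S6) prevents the book from being cancelled and reposted far away at a decision time, so this step does not follow from those assumptions. Your proposed chaining does not fix this as stated either: a bound on the distance between reposted quotes and the old reference price adds to the width of the new spread and overshoots $2\dmax$.

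What is missing is an explicit hypothesis linking the two books, and the useful form is containment rather than distance. Concretely: at each decision time $\tau$ with both sides non-empty before and after, $M_{\tau^-}\in[B_\tau,A_\tau]$. This holds, for instance, if new bids are posted at or below $M_{\tau^-}$ and new asks at or above it, since surviving bids are at most $B_{\tau^-}\le M_{\tau^-}$ and surviving asks are at least $A_{\tau^-}\ge M_{\tau^-}$. This is what the implementation's posting rule (levels measured from the current mid) is meant to deliver. Given containment, $\abs{\Pmicro_\tau-M_{\tau^-}}\le\max(\delta^b_\tau,\delta^a_\tau)\le\dmax$. Also $\Pmicro_{\tau^-}-M_{\tau^-}=\tfrac12(\delta^b_{\tau^-}-\delta^a_{\tau^-})$, so $\abs{\Pmicro_{\tau^-}-M_{\tau^-}}\le\tfrac12\dmax$. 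The jump is therefore at most $\tfrac32\dmax\le2\dmax$.

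Your fallback step needs the same care. (S6) controls only the best quotes, but a taker order that empties a side executes its last unit at the deepest level it consumes. Bounding the jump to $\Plast$ therefore requires that all resting levels, not just $A_t$ and $B_t$, lie within the admissible window. The paper also invokes this in one line rather than deriving it from (S1)--(S6).
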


\begin{proof}
The microprice changes only at trade times and decision times. By Proposition~\ref{prop:nonexplosion}, there are at most $\Nmax$ trade times, and by construction there are at most $\Nsteps$ decision times. Between such times, the book state is constant and so is $\Pmicro_t$.

Whenever both sides are non-empty,
\[
  \Pmicro_t=\frac{A_tQ_t^b+B_tQ_t^a}{Q_t^a+Q_t^b}\in[B_t,A_t].
\]
Assumptions \textup{(S1)--(S6)} give $A_t-\Pmicro_t\le\dmax$ and $\Pmicro_t-B_t\le\dmax$, hence $A_t-B_t\le2\dmax$. At a trade or decision time, the old and new microprices are therefore separated by at most $2\dmax$. If a side becomes empty, the fallback value $\Plast_t$ is an executed price at a level within the same admissible range, so the same one-step bound applies. Summing over at most $\Nmax+\Nsteps$ changes proves \eqref{eq:Pmicro_bound}.
\end{proof}

\begin{proof}[Proof of Proposition~\ref{prop:pathwise}]
For every $t\in[0,T]$,
\begin{align*}
  \abs{I_t}
  &=\abs{S_t-\Pmicro_t}\\
  &=\abs{(S_t-S_0)-(\Pmicro_t-\Pmicro_0)+(S_0-\Pmicro_0)}\\
  &\le \abs{S_t-S_0}+\abs{\Pmicro_t-\Pmicro_0}+\abs{I_0}.
\end{align*}
Taking suprema and using Lemma~\ref{lem:Pmicro_bound} gives
\[
  \sup_{t\le T}\abs{I_t}
  \le
  \sup_{t\le T}\abs{S_t-S_0}+\Delta_P+\abs{I_0}
  =
  \sup_{t\le T}\abs{S_t-S_0}+K_0.
\]
Thus, for $K>K_0$,
\[
  \Prob\!\left(\sup_{t\le T}\abs{I_t}\le K\right)
  \ge
  \Prob\!\left(\sup_{t\le T}\abs{S_t-S_0}\le K-K_0\right).
\]
Since $S_t-S_0=\sigma W_t$, the reflection principle and the standard Gaussian tail bound imply, for $x>0$,
\[
  \Prob\!\left(\sup_{t\le T}\abs{W_t}>x\right)
  \le 2\Prob(\abs{W_T}>x)
  \le 2\exp\!\left(-\frac{x^2}{2T}\right).
\]
Applying this with $x=(K-K_0)/\sigma$ yields \eqref{eq:pathwise_tail}. Sending $K\to\infty$ gives $\Prob(\sup_{t\le T}\abs{I_t}<\infty)=1$.
\end{proof}

The moment claim in the main-text remark follows from the standard tail-integration identity. For example, applying \eqref{eq:pathwise_tail} to $X=\sup_{t\le T}\abs{I_t}$ gives
\[
  \E[X^p]
  =p\int_0^{\infty}K^{p-1}\Prob(X>K)\,\dd K
  \le K_0^p+p\int_{K_0}^{\infty}K^{p-1}2\exp\!\left(-\frac{(K-K_0)^2}{2\sigma^2T}\right)\dd K,
\]
which is finite. The same Gaussian tail yields the stated sub-Gaussian-square integrability for every $\lambda<1/(2\sigma^2T)$.

For reference, the main constants appearing in the stability bounds are
\begin{align*}
  \Nmax &= \frac{2\Dtot\Nsteps}{\qmin}
  \le \frac{2\Dtot T}{\qmin\Dtmin},\\
  \gamma &= \psi e^{-\eta\dmax}\mu_{\min},\\
  \Delta_P &= 2\dmax(\Nmax+\Nsteps),\\
  K_0 &= \abs{I_0}+\Delta_P.
\end{align*}

\section{Cancellation mechanisms}
\label{app:cancel}

Each market maker stores its outstanding quotes by price-level and maintains an \emph{age counter} per quote. The age is incremented once per decision step (i.e.\ on the discrete market-maker grid), so the time-to-live (TTL) is expressed in \emph{steps}, not in continuous time. Deterministic cancellations occur at three points within each decision step:
\begin{enumerate}
    \item \textbf{Time-to-live (TTL) at the beginning of the step.}
    At the beginning of each market-maker decision step, all existing quotes have their age incremented and any quote with age exceeding a fixed threshold is removed. To prevent pathological behavior where agents are forced to constantly repost their best quotes (thereby losing queue priority), the environment protects the closest quotes to the current mid on each side (closest in absolute tick distance). Thus, an old quote is removed only if it is both expired and not among the protected top-$k$ levels.

    \item \textbf{Enforcing per-side outstanding-order caps before posting.}
    Immediately before posting a new bid (respectively ask), the environment enforces two per-side caps:
    (i) a cap on the number of distinct active price levels on that side (denoted as $S_{\text{own}}>0$) and
    (ii) a cap on the total outstanding quantity on that side.
    The environment never rejects the agent's posting action; instead, it makes room by canceling existing levels on that side until both caps are satisfied for the post-trade state including the new order.
    The default cancellation rule implemented in code cancels the level farthest from the current mid on that side (least competitive), with ties broken by age (oldest first). Formally, letting $p$ denote a candidate price level and $\mathrm{age}(p)$ its step-age, the environment ranks candidates by
    \[
        d_{\mathrm{bid}}(p) = M_t - p, \qquad d_{\mathrm{ask}}(p) = p - M_t,
    \]
    and cancels the level maximizing the lexicographic key $(d_{\text{side}}(p), \mathrm{age}(p))$. Importantly, the cap logic is applied only if the incoming order size is strictly positive, so an action corresponding to size $0$ (``no new order'') does not trigger cancellations and does not interact with the caps.

    \item \textbf{Bounding quotes after continuous-time taker trading.}
    After the market evolves in continuous time over the interval $(t^{\mathrm{dec}}_n,t^{\mathrm{dec}}_{n+1}]$ due to market-taker arrivals, the mid-price is recomputed and any remaining quotes that drift beyond a fixed admissible distance from the updated mid are removed to keep the book bounded. Concretely, define $L$ as maximum observable horizon. Then all bids satisfying $M_t - p > L\cdot \Delta p$ and all asks satisfying $p - M_t > L\cdot \Delta p$ are deleted from both the order book and the market maker's internal records.
\end{enumerate}
Together, these rules remove cancellation degeneracy from the action space, reduce the frequency of extreme states caused by unbounded stale quotes, and make the effect of a placement action substantially more predictable from the agent's perspective.

\section{Observation space decomposition}
\label{app:observation}
The observation space can be divided into three parts: the actors' observable state, the global state used by the critic and an agent indicator. We start of by describing the agents' observation space. 
\paragraph{Agents' observation space}
Each agents' observation space can be further qualitatively divided into three sections - (i) scalars, (ii) LOB embedding, and (iii) history. This division matters greatly in the architecture. Below we describe the construction of each of these parts.

\noindent\textbf{(i) Actor scalar features.}
The scalar block contains the agent's own inventory and cash, time elapsed measure, market price, last transaction price together with a (possibly absent) private signal:
\[
u_t^{(i)} = \Big(
q_t^{(i)},\;
c_t^{(i)},\;
M_t,\;
P_t^{\mathrm{last}},\;
\tau_t,\;
\widetilde{S}_t^{(i)},\;
\mathbf{1}\{\text{agent $i$ informed}\}
\Big)\in\mathbb{R}^{7}.
\]
Here $\mathbf{1}\{\cdot\}$ is the signal-availability indicator and $\tau_t\in[0,1]$ is normalized time elapsed in the episode. Optionally, when regime information is exposed to the actor, this block is augmented by
\[
r_t = \big(k_t,\psi_t^{\mathrm{obs}},\sigma_t^{\mathrm{obs}}\big)\in\mathbb{R}^3,
\]
where $k_t \in \{ 0,\ 1\}$ denotes whether the regime is known to agents and $\psi_t^{\mathrm{obs}}, \ \sigma_t^{\mathrm{obs}}$ are respectively observed market volatility and informedness regime. Otherwise these variables are omitted from the actor input.

\medskip
\noindent\textbf{(ii) Public LOB features and liquidity decomposition.}
Let $L$ denote the maximum observable distance from mid in ticks and let tick size be denoted as $\Delta p$. Everything farther than $L$ ticks from the mid-price is canceled (see Appendix \ref{app:cancel}), and agents observe the LOB only within $L$ ticks of the mid-price after canonicalization. Using the level grid defined by:
\[
p^{b}_{t,\ell} = \mathrm{canon}\!\left(R_t - \ell\,\Delta p\right),\qquad
p^{a}_{t,\ell} = \mathrm{canon}\!\left(R_t + \ell\,\Delta p\right),
\qquad \ell=1,\dots,L,
\]
where the reference price $R_t=M_t$ if both sides exist and $R_t=P^{\mathrm{last}}_t$ otherwise. Recall the FIFO queue structure defined in Equation \ref{def:queue}. Thanks to this construction, rather than collapsing each level to a single depth number, the observation exposes a truncated queue representation that preserves time-priority structure up to a fixed depth. Fix an integer $S_{\mathrm{own}}\ge 1$ stating how many different quotes are allowed per agent at one price level and define $S_{\mathrm{oth}}=S_{\mathrm{own}}+1$. Importantly, this is not a cancellation mechanism but a posting constraint. If the agent already occupies the maximum number of own queue slots at a given price level, an additional order at that level is ignored. In regular market conditions this constraint has negligible impact on the simulated dynamics, while preventing pathological over-posting during training and preserving queue-priority effects. For a fixed agent $i$, we define four arrays
\[
V^{\mathrm{own},b}_{t}(i)\in\mathbb{R}^{L\times S_{\mathrm{own}}},\quad
V^{\mathrm{own},a}_{t}(i)\in\mathbb{R}^{L\times S_{\mathrm{own}}},
\qquad
V^{\mathrm{oth},b}_{t}(i)\in\mathbb{R}^{L\times S_{\mathrm{oth}}},\quad
V^{\mathrm{oth},a}_{t}(i)\in\mathbb{R}^{L\times S_{\mathrm{oth}}}.
\]

in the following fashion. For any bid-side price level $p\in\mathcal{B}_t$, recall from Definition~\ref{def:queue} that
\[
\mathcal{Q}^b_t(p)
=
\big((j^b_{1,p},q^b_{1,p}),\dots,(j^b_{n_p,p},q^b_{n_p,p})\big)_t.
\]
For a fixed agent $i$, define the set of FIFO positions occupied by that agent at price $p$ by
\[
\mathcal{I}^b_t(p,i)
\defeq
\big\{m\in\{1,\dots,n_p\}: j^b_{m,p}=i\big\}.
\]
Let
\[
K^b_t(p,i)\defeq |\mathcal{I}^b_t(p,i)|
\]
be the number of stored own bid entries of agent $i$ at price $p$. Notice that then $K^b_t(p,i)\leq \ S_{\mathrm{own}}$. If $K^b_t(p,i)\ge 1$, let
\[
\pi^b_{t,1}(p,i)<\pi^b_{t,2}(p,i)<\dots<\pi^b_{t,K^b_t(p,i)}(p,i)
\]
denote the elements of $\mathcal{I}^b_t(p,i)$ in increasing FIFO order.

Now fix $\ell\in\{1,\dots,L\}$ and set $p=p^b_{t,\ell}$, so the $\ell$-th price level available on the bid side. The own-liquidity tensor is defined by
\[
V^{\mathrm{own},b}_{t,\ell,s}(i)
=
\begin{cases}
q^b_{\pi^b_{t,s}(p,i),\,p}, & 1\le s\le K^b_t(p,i),\\
0, & K^b_t(p,i)< s\le S_{\mathrm{own}}.
\end{cases}
\]
Thus, $V^{\mathrm{own},b}_{t,\ell,s}(i)$ stores the sizes of the $S_{\mathrm{own}}$ own bid orders of agent $i$ at level $\ell$, preserving FIFO priority.

The foreign-liquidity tensor
\[
V^{\mathrm{oth},b}_{t,\ell,s}(i),\qquad s=1,\dots,S_{\mathrm{oth}},\qquad S_{\mathrm{oth}}=S_{\mathrm{own}}+1,
\]
uses the same queue positions and has fixed slot semantics:
\begin{align*}
V^{\mathrm{oth},b}_{t,\ell,1}(i)
&=
\begin{cases}
\displaystyle\sum_{\substack{m<\pi^b_{t,1}(p,i)\\ j^b_{m,p}\neq i}} q^b_{m,p}, & K^b_t(p,i)\ge 1,\\[1ex]
0, & K^b_t(p,i)=0,
\end{cases}\\[1ex]
V^{\mathrm{oth},b}_{t,\ell,s}(i)
&=
\begin{cases}
\displaystyle\sum_{\substack{\pi^b_{t,s-1}(p,i)<m<\pi^b_{t,s}(p,i)\\ j^b_{m,p}\neq i}} q^b_{m,p},
& 2\le s\le K^b_t(p,i),\\[1ex]
0, & K^b_t(p,i)< s\le S_{\mathrm{own}},
\end{cases}\\[1ex]
V^{\mathrm{oth},b}_{t,\ell,S_{\mathrm{oth}}}(i)
&=
\begin{cases}
\displaystyle\sum_{\substack{m>\pi^b_{t,K^b_t(p,i)}(p,i)\\ j^b_{m,p}\neq i}} q^b_{m,p},
& K^b_t(p,i)\ge 1,\\[1ex]
\displaystyle\sum_{\substack{m=1,\dots,n_p\\ j^b_{m,p}\neq i}} q^b_{m,p},
& K^b_t(p,i)=0.
\end{cases}
\end{align*}
Hence the competing liquidity is decomposed into volume ahead of the first own order, volume between consecutive own orders, and volume behind the last stored own order. If agent $i$ has no own order at that price level, all competing liquidity is placed in the final slot $S_{\mathrm{oth}}$.

The ask-side tensors are defined analogously. For any ask-side price level $p\in\mathcal{A}_t$, write
\[
\mathcal{Q}^a_t(p)
=
\big((j^a_{1,p},q^a_{1,p}),\dots,(j^a_{n_p,p},q^a_{n_p,p})\big)_t,
\]
define $\mathcal{I}^a_t(p,i)$, $K^a_t(p,i)$ and the ordered positions $\pi^a_{t,s}(p,i)$ in the same way, and then instantiate the definitions above at $p=p^a_{t,\ell}$ (so $\ell$-th level on the ask side) to obtain $V^{\mathrm{own},a}_{t,\ell,s}(i)$ and $V^{\mathrm{oth},a}_{t,\ell,s}(i)$.

\medskip
\noindent This representation preserves the economically relevant time-priority information at each level: the agent observes not only how much competing liquidity exists at a given price, but also whether it is ahead of, between, or behind its own queue entries. At the same time, the resulting features are fixed-shape arrays, hence directly compatible with feed-forward policies.

\medskip
\noindent\textbf{(iii) History embedding.}
\label{def:history_emb}
The history tensor is defined as:
\[
H_t^{(i)}=\big[\xi_{t-H}^{(i)},\dots,\xi_{t-1}^{(i)}\big]^{\top}\in\mathbb{R}^{H\times 6},
\]
where
\[
\xi_{t}^{(i)} =
\Big(
B_t,\;
A_t,\;
P^{\mathrm{last}}_t,\;
\widetilde{S}_t^{(i)},\;
q_t^{(i)},\;
c_t^{(i)}
\Big)\in\mathbb{R}^{6}.
\]
The buffer is padded at episode start to provide a constant shape. 

\medskip
\noindent\textbf{Centralized critic state (CTDE).}
\label{def:global_state}
For centralized training, the critic receives a privileged state vector
\[
s_t \in \mathcal{S}.
\]
This vector is not part of the decentralized actor observation $o_t^{(i)}$. In addition, the environment returns a one-hot identifier
\[
e^{(i)}\in\{0,1\}^{N},
\]
indicating for which agent the value function is being evaluated.

In the implementation, $s_t$ is a single flat vector built from a global scalar block, all per-agent scalar blocks, and the market-wide queue tensors. Define
\[
g_t
=
\Big(
M_t,\;
P_t^{\mathrm{last}},\;
S_t,\;
\tau_t,\;
\psi_t^{\mathrm{obs}},\;
\sigma_t^{\mathrm{obs}}
\Big)\in\mathbb{R}^{6},
\]
where $\tau_t$ is normalized time elapsed, and where $\psi_t^{\mathrm{obs}}=\sigma_t^{\mathrm{obs}}=0$ whenever the current regime is hidden from the learning pipeline as previously.

For each agent $j\in\{1,\dots,N\}$, define the per-agent scalar block
\[
a_t^{(j)}
=
\Big(
q_t^{(j)},\;
c_t^{(j)},\;
\alpha_j,\;
\phi_j,\;
\widetilde{S}_t^{(j)},\;
\mathbf{1}\{\text{agent $j$ informed}\}
\Big)\in\mathbb{R}^{6},
\]

where $\alpha_j,\;\phi_j,\;$ are the policy specific inventory penalties that play a role in the reward structure in Section \ref{sec:rewards}. Using the same queue encoding as in the actor observation, let
\[
V_t^{\mathrm{own},b}(j)\in\mathbb{R}^{L\times S_{\mathrm{own}}},
\quad
V_t^{\mathrm{oth},b}(j)\in\mathbb{R}^{L\times S_{\mathrm{oth}}},
\quad
V_t^{\mathrm{own},a}(j)\in\mathbb{R}^{L\times S_{\mathrm{own}}},
\quad
V_t^{\mathrm{oth},a}(j)\in\mathbb{R}^{L\times S_{\mathrm{oth}}},
\]
denote the bid/ask own- and foreign-liquidity tensors from agent $j$'s perspective, where $S_{\mathrm{oth}}=S_{\mathrm{own}}+1$.

The centralized critic state is then the concatenation
\[
\begin{aligned}
s_t
=
\Big(
&g_t,\;
a_t^{(1)},\dots,a_t^{(N)},\\
&\operatorname{vec}\!\big(V_t^{\mathrm{own},b}(1)\big),\dots,\operatorname{vec}\!\big(V_t^{\mathrm{own},b}(N)\big),\\
&\operatorname{vec}\!\big(V_t^{\mathrm{oth},b}(1)\big),\dots,\operatorname{vec}\!\big(V_t^{\mathrm{oth},b}(N)\big),\\
&\operatorname{vec}\!\big(V_t^{\mathrm{own},a}(1)\big),\dots,\operatorname{vec}\!\big(V_t^{\mathrm{own},a}(N)\big),\\
&\operatorname{vec}\!\big(V_t^{\mathrm{oth},a}(1)\big),\dots,\operatorname{vec}\!\big(V_t^{\mathrm{oth},a}(N)\big)
\Big),
\end{aligned}
\]
where the ordering is agent-major, then level-major, then slot-major within each tensor.

Hence
\[
\dim(s_t)
=
6 + 6N + N\Big(LS_{\mathrm{own}} + LS_{\mathrm{oth}} + LS_{\mathrm{own}} + LS_{\mathrm{oth}}\Big)
=
6 + 6N + 2NL(2S_{\mathrm{own}}+1).
\]

\section{Agents' architecture}
\label{app:agents_net}
\paragraph{(i) Scalar encoder.}
Let $u_t^{(i)}\in\mathbb{R}^{d_{\mathrm{sc}}}$ denote the concatenation of the scalar keys used by the actor in code (see Appendix \ref{app:observation}).
These scalars are embedded by a one-layer MLP with $\tanh$ activation:
\[
h^{(i)}_{t,\mathrm{sc}}
=
\tanh\!\big(W_{\mathrm{sc}}\bar u_t^{(i)}+b_{\mathrm{sc}}\big)
\in\mathbb{R}^{d_{\text{sc\_out}}}.
\]
\paragraph{(ii) Liquidity cross attention.}
Recall that each agent observes two depth tensors describing its own outstanding orders and the liquidity posted by others:
\[
V^{\mathrm{own},b}_{t}(i)\in\mathbb{R}^{L\times S_{\mathrm{own}}},\quad
V^{\mathrm{own},a}_{t}(i)\in\mathbb{R}^{L\times S_{\mathrm{own}}},
\qquad
V^{\mathrm{oth},b}_{t}(i)\in\mathbb{R}^{L\times S_{\mathrm{oth}}},\quad
V^{\mathrm{oth},a}_{t}(i)\in\mathbb{R}^{L\times S_{\mathrm{oth}}},
\]
where $L$ is the number of tracked price levels (in ticks away from the reference) and $S_{\mathrm{own}},S_{\mathrm{oth}}$ are the per-level feature widths in the environment (see Appendix \ref{app:observation}).

For a fixed agent $i$ and level $\ell\in\{1,\dots,L\}$ we define the \emph{own} and \emph{other} level tokens by concatenation:
\[
\mathbf{u}^{(i)}_{t,\ell}
\; \defeq \;
\big[\,
V^{\mathrm{own},b}_{t}(i)[\ell,:],\;
V^{\mathrm{own},a}_{t}(i)[\ell,:]
\,\big]
\in\mathbb{R}^{2S_{\mathrm{own}}},
\]
\[\mathbf{v}^{(i)}_{t,\ell}
\; \defeq \;
\big[\,
V^{\mathrm{oth},b}_{t}(i)[\ell,:],\;
V^{\mathrm{oth},a}_{t}(i)[\ell,:]
\,\big]
\in\mathbb{R}^{2S_{\mathrm{oth}}}.
\]
Therefore, these vectors represent FIFO order liquidity at the $\ell$-th level from the mid-price on both sides, for both own and foreign liquidity. Although the bid and ask prices differ, the level index is defined relative to the mid-price in both cases. Stacking the level tokens yields two matrices
\[
U_t^{(i)}\in\mathbb{R}^{L\times 2S_{\mathrm{own}}},
\quad
V_t^{(i)}\in\mathbb{R}^{L\times 2S_{\mathrm{oth}}},
\qquad
U_t^{(i)}[\ell,:]=\mathbf{u}^{(i)}_{t,\ell},\ \ V_t^{(i)}[\ell,:]=\mathbf{v}^{(i)}_{t,\ell}.
\]

To combine own exposure with the surrounding liquidity landscape, the actor applies a single cross-attention layer in which each level of the agent's own token sequence queries all levels of the other-liquidity sequence. In particular, with key dimension $d_k$ and output dimension $d_{\mathrm{att}}$, the implementation computes
\[
Q = U_t^{(i)} W_q \in\mathbb{R}^{L\times d_k},\quad
K = V_t^{(i)} W_k \in\mathbb{R}^{L\times d_k},\quad
\mathcal{V} = V_t^{(i)} W_v \in\mathbb{R}^{L\times d_{\mathrm{att}}},
\]
followed by
\[
A = \mathrm{softmax}\!\left(\frac{QK^{\top}}{\sqrt{d_k}}\right)\in\mathbb{R}^{L\times L},
\qquad
C_t^{(i)} = A\,\mathcal{V}\in\mathbb{R}^{L\times d_{\mathrm{att}}}.
\]
Thus, for each level $\ell$, the row $C_{t,\ell}^{(i)}$ is a convex combination of transformed other-liquidity vectors across all levels, with weights determined by the compatibility between the agent's own queue configuration at $\ell$ and the competitive liquidity profile at every level.

\paragraph{History encoder.}
Using a history tensor $H_t^{(i)}\in\mathbb{R}^{H\times F}$ (as introduced in Appendix \ref{app:observation}), the actor embeds it with a single-layer GRU (\cite{cho2014gru}) of hidden size $d_h$:
\[
(h^{(i)}_{t,1},\dots,h^{(i)}_{t,H}) = \mathrm{GRU}\!\big(H_t^{(i)}\big),
\qquad
h^{(i)}_{t,\mathrm{hist}} \;\defeq\; h^{(i)}_{t,H}\in\mathbb{R}^{d_h}.
\]

\paragraph{Policy head.}
Let $\mathrm{vec}(\cdot)$ denote row-wise vectorization. The actor feature vector is
\[
z_t^{(i)} \;=\;
\Big[
h^{(i)}_{t,\mathrm{sc}},\;
h^{(i)}_{t,\mathrm{hist}},\;
\mathrm{vec}\!\big(C_t^{(i)}\big)
\Big]
\in \mathbb{R}^{d_{\text{sc\_out}} + d_h + d_{\text{att}}L}.
\]
A single policy head is a two-layer MLP with $\tanh$ nonlinearity and hidden width $d_{\text{head}}$:
\[
y_t^{(i)} = W_2\,\tanh(W_1 z_t^{(i)} + b_1) + b_2 \in \mathbb{R}^{d_\pi}.
\]

\paragraph{Mixture-of-experts (MoE) actor head.}
Policy heads are merged into a mixture-of-experts head. The only difference between agent types is the \emph{gating regime}: agents with access to a fundamental proxy (i.e.\ $\mathbf{1}\{\text{agent $i$ informed}\}=1$) use an inventory$\times$mispricing gate, whereas agents without that information use an inventory-only gate. 

Let $z_t^{(i)}$ be the actor feature vector defined above and let $d_\pi$ denote the total number of logits required by RLlib for the discrete action distribution. For a fixed number of experts $E$, each expert produces logits
\[
y_t^{(i,e)} = f_{\pi}^{(e)}(z_t^{(i)}) \in \mathbb{R}^{d_\pi},
\qquad e\in\{1,\dots,E\},
\]
where $f_{\pi}^{(e)}$ is a two-layer MLP with $\tanh$ activations and hidden width $256$. The final logits are a convex combination
\[
y_t^{(i)} = \sum_{e=1}^{E} w_t^{(i,e)}\,y_t^{(i,e)},
\qquad
w_t^{(i,e)}\ge 0,\quad \sum_{e=1}^{E} w_t^{(i,e)}=1.
\]
The gating weights $w_t^{(i,e)}$ are deterministic functions of low-dimensional indicators derived from the observation introduced in the next paragraph.

\medskip
\noindent\emph{Inventory gate (all agents).}
The inventory gate produces weights for negative vs.\ positive inventory regimes as a hard split,
\[
w_{\mathrm{pos}}^{(i)}=\mathbf{1}\{q_t^{(i)}>0\},\qquad
w_{\mathrm{neg}}^{(i)}=1-w_{\mathrm{pos}}^{(i)}.
\]

\medskip
\noindent\emph{Inventory-only MoE (uninformed agents).}
For agents with $\mathbf{1}\{\text{agent $i$ informed}\}=0$ the MoE uses $E=2$ experts corresponding to the two inventory regimes (neg/pos):
\[
y_t^{(i)}
=
w_{\mathrm{neg}}^{(i)}\,y_t^{(i,\mathrm{neg})}
+
w_{\mathrm{pos}}^{(i)}\,y_t^{(i,\mathrm{pos})}.
\]

\medskip
\noindent\emph{Inventory $\times$ mispricing MoE (agents with fundamental proxy).}
For agents with $\mathbf{1}\{\text{agent $i$ informed}\}=1$ an additional mispricing gate is applied. Using the actor-side proxy $\widetilde{S}_t^{(i)}$ and mid-price $M_t$, define the signed mispricing indicator
\[
\Delta_t^{(i)} := M_t - \widetilde{S}_t^{(i)}.
\]
The under/over weights $(w_{\mathrm{under}}^{(i)},w_{\mathrm{over}}^{(i)})$ are computed from $\Delta_t^{(i)}$ using the same hard gating form. The MoE then uses $E=4$ experts corresponding to the Cartesian product of regimes
\[
(\mathrm{neg},\mathrm{pos})\times(\mathrm{under},\mathrm{over}),
\]
with mixture weights given by products:
\[
\begin{aligned}
w_t^{(i,\mathrm{neg,under})}&=w_{\mathrm{neg}}^{(i)}w_{\mathrm{under}}^{(i)},&
w_t^{(i,\mathrm{neg,over})}&=w_{\mathrm{neg}}^{(i)}w_{\mathrm{over}}^{(i)},\\
w_t^{(i,\mathrm{pos,under})}&=w_{\mathrm{pos}}^{(i)}w_{\mathrm{under}}^{(i)},&
w_t^{(i,\mathrm{pos,over})}&=w_{\mathrm{pos}}^{(i)}w_{\mathrm{over}}^{(i)}.
\end{aligned}
\]
Finally,
\[
y_t^{(i)}
=
\sum_{r\in\{(\mathrm{neg,under}),(\mathrm{neg,over}),(\mathrm{pos,under}),(\mathrm{pos,over})\}}
w_t^{(i,r)}\,y_t^{(i,r)}.
\]
This design hard-codes economically meaningful regime structure (inventory sign and, when available, mispricing sign) while preserving a fully differentiable mapping from observations to action logits.

\section{Centralized critic's  architecture}
\label{app:critics_net}

\paragraph{Critic scalar encoder.}
The critic first builds a scalar feature vector by concatenating:
(i) the global scalars $g_t$,
(ii) all per-agent scalars $a_t^{(j)}$ augmented by an ``is-current-agent'' indicator extracted from $\mathbf{e}^{(i)}$,
and (iii) the history embedding (from the same GRU that is used by the actor) $h_{t,\mathrm{hist}}^{(i)}$ (see Appendix \ref{app:observation}).

Concretely, define the augmented per-agent tokens
\[
\tilde a_t^{(j\mid i)}=\big[a_t^{(j)},\ \mathbf{1}\{j=i\}\big]\in\mathbb{R}^{7},
\qquad
\tilde a_t^{(\cdot\mid i)}=\big[\tilde a_t^{(1\mid i)},\dots,\tilde a_t^{(N\mid i)}\big]\in\mathbb{R}^{7N},
\]
and set
\[
x_{t,\mathrm{sc}}^{(i)}=
\big[g_t,\ \tilde a_t^{(\cdot\mid i)},\ h_{t,\mathrm{hist}}^{(i)}\big].
\]
The scalar encoder is a single-layer MLP with $\tanh$ activation:
\[
h_{t,\mathrm{sc}}^{(i)}=\tanh\!\big(W_{\mathrm{sc}}x_{t,\mathrm{sc}}^{(i)}+b_{\mathrm{sc}}\big)\in\mathbb{R}^{d_{\text{sc\_out}}^{\text{crit}}}.
\]

\paragraph{Cross-attention over the market-wide queue state.}
Using the same per-level queue vectors as in the actors subsections, define for each agent $j$ and level $\ell$:
\[
\mathbf{w}_{t,\ell}^{(j)}
\;\defeq\;
\big[\,
\mathbf{u}_{t,\ell}^{(j)},\;
\mathbf{v}_{t,\ell}^{(j)}
\,\big]
\in\mathbb{R}^{F_q},
\qquad
F_q = 2S_{\mathrm{own}} + 2S_{\mathrm{oth}} = 4S_{\mathrm{own}} + 2.
\]
Stacking these tokens over all agents and all tracked levels yields
\[
W_t\in\mathbb{R}^{N\times L\times F_q},
\qquad
W_t[j,\ell,:]=\mathbf{w}_{t,\ell}^{(j)}.
\]
Using the one-hot identifier $\mathbf{e}^{(i)}$, define the current-agent query sequence
\[
U_t^{(i),\mathrm{critic}}\in\mathbb{R}^{L\times F_q},
\qquad
U_t^{(i),\mathrm{critic}}[\ell,:]
=
\sum_{j=1}^{N} e_j^{(i)}\,W_t[j,\ell,:]
=
W_t[i,\ell,:],
\]
and the market-wide context sequence
\[
V_t^{(i),\mathrm{critic}}\in\mathbb{R}^{(NL)\times F_q},
\qquad
V_t^{(i),\mathrm{critic}}[(j-1)L+\ell,:]
=
W_t[j,\ell,:].
\]
Then, in the same manner as in the actor part, the critic cross-attention block is written as
\begin{align*}
Q^{\mathrm{critic}}
=
U_t^{(i),\mathrm{critic}} W_q^{\mathrm{critic}}
\in\mathbb{R}^{L\times d_k^{\mathrm{critic}}},
\\
K^{\mathrm{critic}}
=
V_t^{(i),\mathrm{critic}} W_k^{\mathrm{critic}}
\in\mathbb{R}^{(NL)\times d_k^{\mathrm{critic}}},
\\
\mathcal{V}^{\mathrm{critic}}
=
V_t^{(i),\mathrm{critic}} W_v^{\mathrm{critic}}
\in\mathbb{R}^{(NL)\times d_{\mathrm{att}}^{\mathrm{critic}}},
\end{align*}
followed by
\[
A_t^{(i),\mathrm{critic}}
=
\mathrm{softmax}\!\left(
\frac{Q^{\mathrm{critic}}(K^{\mathrm{critic}})^{\top}}
{\sqrt{d_k^{\mathrm{critic}}}}
\right)
\in\mathbb{R}^{L\times (NL)},
\qquad
C_t^{(i),\mathrm{critic}}
=
A_t^{(i),\mathrm{critic}}\mathcal{V}^{\mathrm{critic}}
\in\mathbb{R}^{L\times d_{\mathrm{att}}^{\mathrm{critic}}}.
\]
Thus, for each level $\ell$, the row $C_{t,\ell}^{(i),\mathrm{critic}}$ is a convex combination of transformed market-wide queue tokens across all agents and all tracked levels, with weights determined by the compatibility between the current agent's queue state at level $\ell$ and the full market-wide queue configuration.

\paragraph{Value features and MoE value head.}
The critic feature vector is
\[
x_t^{(i)}=\big[h_{t,\mathrm{sc}}^{(i)},\ C_{t,\ell}^{(i),\mathrm{critic}}\big]\in\mathbb{R}^{d_{\text{sc\_out}}^{\text{crit}}+d_{\text{att}}^{\text{crit}}L}.
\]
In the implementation the value is produced by a $2\times 2$ mixture-of-experts (four experts), gated by
(i) the sign of the current agent’s inventory and
(ii) the sign of \textbf{global mispricing}.

Each expert is a two-layer $\tanh$ MLP mapping $x_t^{(i)}$ to a scalar:
\[
v_t^{(i,e)} = f_V^{(e)}(x_t^{(i)})\in\mathbb{R},
\qquad e\in\{1,2,3,4\}.
\]
The mispricing proxy is computed from the global scalars as
\[
\Delta_{t}^{\mathrm{critic}} =
M_t - S_t.
\]
Inventory for gating is extracted from the centralized per-agent block using $\mathbf{e}^{(i)}$:
\[
I_t^{(i)} = \sum_{j=1}^{N} e_j^{(i)}\,q_t^{(j)}.
\]
Using the same hard/soft gating map $\mathcal{G}$ as in the actor (with different slopes), obtain
\[
(w_{\mathrm{neg}}^{(i)},w_{\mathrm{pos}}^{(i)})=\mathcal{G}(I_t^{(i)}),
\qquad
(w_{\mathrm{under}},w_{\mathrm{over}})=\mathcal{G}(\Delta_{t,\mathrm{critic}}),
\]
and form the $2\times 2$ mixture weights
\[
\big(w^{(i)}_{\mathrm{neg,under}},w^{(i)}_{\mathrm{neg,over}},w^{(i)}_{\mathrm{pos,under}},w^{(i)}_{\mathrm{pos,over}}\big)
=
\big(w_{\mathrm{neg}}^{(i)}w_{\mathrm{under}},\ w_{\mathrm{neg}}^{(i)}w_{\mathrm{over}},\ w_{\mathrm{pos}}^{(i)}w_{\mathrm{under}},\ w_{\mathrm{pos}}^{(i)}w_{\mathrm{over}}\big).
\]
The final value is the corresponding convex combination:
\[
V_t^{(i)} =
w^{(i)}_{\mathrm{neg,under}}\,v_t^{(i,1)}
+
w^{(i)}_{\mathrm{neg,over}}\,v_t^{(i,2)}
+
w^{(i)}_{\mathrm{pos,under}}\,v_t^{(i,3)}
+
w^{(i)}_{\mathrm{pos,over}}\,v_t^{(i,4)}.
\]

\section{Parameters}

This section collects the numerical settings used throughout the simulations and training. In particular, the reported optimization hyperparameters include the Kullback--Leibler (KL) target schedule, the number of stochastic gradient descent (SGD) epochs per update, and the generalized advantage estimation (GAE) parameter used within the proximal policy optimization (PPO) procedure.

\begin{table}[htbp]
\centering
\begin{tabularx}{\textwidth}{@{} l l l X @{}}
\toprule
Parameter & Symbol & Value & Comment \\
\midrule
Number of bins & $K$ & $3$ &  \\
Bins of volumes & -- & $[0,40],\ [40,80],\ [80,200]$ &  \\
Number of exponentials & $R$ & $2$ &  \\
Self-excitation kernel norms (same side) & $A^{AA} = A^{BB}$ & $(0.3,\,0.12,\,0.07)$ & Scaled by $s_\alpha =0.1$ \\
Cross-excitation kernel norms (across sides) & $A^{AB} = A^{BA}$ & $(0.07,\,0.035,\,0.02)$ & Scaled by $s_\alpha =0.1$ \\
Exponential decay parameters & $\beta$ & $(3.7,\,4.2)$ &  \\
Exponential mixture weights & $w$ & $(0.7,\,0.3)$ &  \\
Baseline intensities & $\mu$ & $(3.5,\,2.5,\,1.5)$ &  \\
Shape parameter & $\eta$ & $2.8$ &  \\
Scale parameter & $\zeta$ & $70$ &  \\
\bottomrule
\end{tabularx}
\caption{Hawkes process parameters.}
\label{tab:hawkes_params}
\end{table}

\begin{table}[htbp]
\centering
\small
\setlength{\tabcolsep}{4pt}
\renewcommand{\arraystretch}{1.20}
\begin{tabularx}{\textwidth}{@{} l X c c l @{}}
\toprule
Regime mode & Description & Switching time & Changes & Regime values \\
\midrule

\texttt{Regime model 2}
& One change at a fixed time to a random regime
& $t_c=50$
& $1$
& $(\psi,\sigma)\in\Psi\times\Sigma$ \\

\texttt{Regime model 4}
& One change at a random time to a random regime
& $t_c\sim\mathcal{U}\{10,\ldots,90\}$
& $1$
& $(\psi,\sigma)\in\Psi\times\Sigma$ \\

\texttt{Regime model 5}
& Fixed number of changes on a fixed grid to random regimes
& $t_c\in\{20,50,80\}$
& $3$
& $(\psi,\sigma)\in\Psi\times\Sigma$ \\

\texttt{Regime model 6}
& Fixed number of changes on a random grid to random regimes
& $t_c^{(m)}\sim\mathcal{U}\{10,\ldots,90\}$
& $3$
& $(\psi,\sigma)\in\Psi\times\Sigma$ \\

\texttt{Regime model 7}
& Exponential waiting times for random regimes
& $\Delta t_c\sim\mathrm{Exp}(\lambda)$
& Random
& $(\psi,\sigma)\in\Psi_{\lambda}\times\Sigma$ \\
\bottomrule
\end{tabularx}
\caption{Regime-switching specifications used in the non-stationary market experiments. The variable $\psi$ denotes the informed ratio and $\sigma$ denotes the market-volatility parameter. Regime models draw the post-change state from the full regime grid $\Psi=\{0.00,0.01,0.05,0.10,0.20,0.30,0.40,0.50,0.60,0.70,0.80,0.90,0.95,0.99,1.00\}$ and $\Sigma=\{0.001,0.005,0.010,0.020\}$. For each regime model, both known and unknown variants can be evaluated; in the known variant, the current regime variables are included in the agents' observations, whereas in the unknown variant they are hidden from the learning pipeline.}
\label{tab:regime_model_specifications}
\end{table}

\begin{table}[htbp]
\centering

\begin{tabularx}{\textwidth}{@{} l c c c c @{}}
\toprule
Policy & Agents & $\phi$ & $\alpha$ & $\sigma_{\text{signal}}$ \\
\midrule
Conservative & $4$ & $3\cdot 10^{-4}$ & $3\cdot 10^{-3}$ & $0$  \\

Quasi-informed conservative & $4$ & $3\cdot 10^{-4}$ & $3\cdot 10^{-3}$ & $10^{-5}$  \\

Aggressive & $4$ & $10^{-4}$ & $10^{-3}$ & $0$  \\

Quasi-informed aggressive & $4$ & $10^{-4}$ & $10^{-3}$ & $10^{-5}$  \\
\bottomrule
\end{tabularx}
\caption{Agent configuration and policy-specific parameters. Additionally, the learning rate for both informed and uninformed conservative policies is scheduled as $\{(0,9\cdot10^{-4}),\,(15,3\cdot10^{-4}),\,(40,10^{-4}),\,(100,5\cdot10^{-5})\}$, while for both aggressive policies it is scheduled as $\{(0,8\cdot10^{-4}),\,(15,3\cdot10^{-4}),\,(40,10^{-4}),\,(100,5\cdot10^{-5})\}$. Values between the scheduled points are linearly interpolated for both schedules. In each pair, the first entry denotes the iteration and the second the scheduled value.}
\label{tab:agent_policy_params}
\end{table}

\begin{table}[htbp]
\centering

\begin{tabularx}{\textwidth}{@{} l l X @{}}
\toprule
Parameter & Value \\
\midrule
Total number of agents & $16$ \\
Agents posting volumes & $[1, 5, 10, 20]$\\
KL target schedule &
$\{(0,0.08),\,(10,0.08),\,(25,0.02),\,(45,0.01),\,(100,0.005)\}$\\
Entropy schedule &
$\{(0,0.003),\,(80,0)\}$ \\
Maximum resting quantity per agent per side & $60$ &  \\
Observation history length & $8$ &  \\
Protected levels from cancellation & $2$  \\
Postable levels & $4$ &  \\
Observable levels before cancellation & $6$ &  \\
\bottomrule
\end{tabularx}
\caption{Common agent constraints and schedules. Values between the scheduled points are linearly interpolated for all schedules. In each pair, the first entry denotes the iteration and the second the scheduled value.}
\label{tab:common_agent_params}
\end{table}

\begin{table}[htbp]
\centering
\begin{tabularx}{\textwidth}{@{} l l X @{}}
\toprule
Parameter & Symbol & Value \\
\midrule
Training batch size & -- & $81920$ \\
Minibatch size & -- & $40960$ \\
Number of SGD epochs & -- & $10$ \\
Value loss coefficient & $c_{\mathrm{vf}}$ & $1$ \\
KL penalty coefficient & $c_{\mathrm{kl}}$ & $0.2$ \\
Discount factor & $\gamma$ & $0.95$ \\
GAE parameter & $\lambda$ & $0.95$ \\
Policy clip parameter & $\varepsilon_{\mathrm{clip}}$ & $0.15$ \\
Value clip parameter & $\varepsilon_v$ & $2$ \\
\bottomrule
\end{tabularx}
\caption{PPO and training parameters.}
\label{tab:ppo_params}
\end{table}

\begin{table}[htbp]
\centering
\begin{tabularx}{\textwidth}{@{} l l l @{}}
\toprule
Parameter & Symbol & Value \\
\midrule
Interest rate & $r$ & $0$ \\
Drift & -- & $0$ \\
Rebate & $R_{\text{rebate}}$ & $0.002$ \\
Tick size & $\Delta p$ & $0.01$ \\
Initial price & $S_0 = M_0$ & $\$1$ \\
\bottomrule
\end{tabularx}
\caption{Market setup.}
\label{tab:market_params}
\end{table}

\begin{table}[htbp]
\captionsetup{list=no}
\centering
\begin{tabularx}{\textwidth}{@{} X l l @{}}
\toprule
Parameter & Symbol & Value \\
\midrule
Scalar encoder's output dimension in actor's architecture & $d_{\text{sc\_out}}$ & $64$\\
Key dimension in actor's architecture's cross-attention & $d_k$ & $64$\\
Output dimension in actor's architecture cross-attention & $d_{\text{att}}$ & $32$\\
GRU hidden size & $d_h$ & $32$\\
Policy head hidden width & $d_{\text{head}}$ & $256$\\
Scalar encoder's output dimension in critic's architecture & $d_{\text{sc\_out}}^{\text{crit}}$ & $64$\\
Key dimension in critic's architecture cross-attention & $d_k^{\text{crit}}$ & $64$\\
Output dimension in critic's architecture cross-attention & $d_{\text{att}}^{\text{crit}}$ & $32$\\
\bottomrule
\end{tabularx}
\caption{Neural network parameters.}
\label{tab:nn_params}
\end{table}
\end{appendices}
\clearpage

\bibliography{refs}

\end{document}